\theoremstyle{plain}
\newtheorem{theorem}{Theorem}
\newtheorem{lemma}[theorem]{Lemma}
\theoremstyle{definition}
\newtheorem{definition}[theorem]{Definition}
\def\newarrow#1{\mathop{{\hbox{\setbox0=\hbox{$\scriptstyle{#1\quad}$}{$%
\mathrel{\mathop{\setbox1=\hbox to
\wd0{\rightarrowfill}\ht1=3pt\dp1=-2pt\box1}\limits^{#1}}%
$}}}}}
\newcommand{\Transition}[3]{\ensuremath{#1 \newarrow{#2} #3}}
\newcommand{\States}{\ensuremath{S}}
\newcommand{\Transsys}{\ensuremath{\mathcal{T}}}
\newcommand{\Act}{\ensuremath{\mathcal{A}}}
\newcommand{\Label}{\ensuremath{\ell}}
\newcommand{\Prop}{\ensuremath{\mathcal{P}}}
\newcommand{\Progs}{\ensuremath{\Sigma}}
\newcommand{\Var}{\ensuremath{\mathcal{V}}}
\newcommand{\mutrue}{\ensuremath{\mathtt{t\!t}}}
\newcommand{\mufalse}{\ensuremath{\mathtt{f\!f}}}
\newcommand{\mudiam}[1]{\langle #1 \rangle}
\newcommand{\mubox}[1]{\ensuremath{[ #1 ]}}
\newcommand{\term}{\ensuremath{\tau}}
\newcommand{\mudiabox}[1]{\ensuremath{[\!\!\langle #1 \rangle\!\!]}}
\newcommand{\md}[1]{\ensuremath{\mathit{md}(#1)}}
\newcommand{\sem}[3]{\ensuremath{\llbracket #1 \rrbracket_{#3}^{#2}}}
\newcommand{\Sem}[3]{\ensuremath{|\!| #1 |\!|_{#3}^{#2}}}
\newcommand{\REG}{\textup{REG}}
\newcommand{\CFL}{\textup{CFL}}
\newcommand{\VPL}{\textup{VPL}}
\newcommand{\CSL}{\textup{CSL}}
\newcommand{\PDL}[1]{\def\temp{#1}\textup{PDL}\ifx\temp\empty\else\textup{[}#1\textup{]}\fi\xspace}
\newcommand{\PDLREG}{\PDL{\REG}}
\newcommand{\PDLCFL}{\PDL{\CFL}}
\newcommand{\PDLVPL}{\PDL{\VPL}}
\newcommand{\mucalc}{\ensuremath{\mathcal{L}_\mu}\xspace}
\newcommand{\LTL}{\textup{LTL}\xspace}
\newcommand{\CTL}{\textup{CTL}\xspace}
\newcommand{\RecCTL}{\textup{RecCTL}\xspace}
\newcommand{\FLC}{\textup{FLC}\xspace}
\newcommand{\vpFLC}{\textup{vpFLC}\xspace}
\newcommand{\HFL}[1]{\textup{HFL}\ensuremath{^{#1}}\xspace}
\newcommand{\EXPTIME}{\textup{EXPTIME}\xspace}
\newcommand{\TwoEXPTIME}{\textup{2EXPTIME}\xspace}
\newcommand{\eword}{\varepsilon}
\newcommand{\call}{c}
\newcommand{\ret}{r}
\newcommand{\internal}{i}
\newcommand{\stdown}{\bot}
\newcommand{\aut}[1]{\ensuremath{\mathcal{#1}}}
\newcommand{\lclass}{C}
\newcommand{\deriv}[2]{\ensuremath{\Delta_{#1}(#2)}}
\newcommand{\Nat}{\ensuremath{\mathbb{N}}}
\newlist{proplist}{description}{1}
\setlist[proplist]{leftmargin=1cm, itemindent=-1cm, itemsep=1ex}
\title{Separating the Expressive Power of Propositional Dynamic and Modal Fixpoint Logics}
\author{Eric Alsmann \qquad \qquad Florian Bruse \qquad \qquad Martin Lange
\institute{School of Electrical Engineering and Computer Science\\
University of Kassel, Germany}
\email{ eric.alsmann@student.uni-kassel.de \enskip florian.bruse@uni-kassel.de \enskip martin.lange@uni-kassel.de  }
}
\begin{document}
\maketitle

\begin{abstract}
We investigate the expressive power of the two main kinds of program logics for complex, non-regular program properties
found in the literature: those extending propositional dynamic logic (\PDL{}), and those extending the modal $\mu$-calculus. This is
inspired by the recent discovery of a decidable program logic called \emph{Visibly Pushdown Fixpoint Logic with Chop} which extends both
the modal $\mu$-calculus and \PDL{} over visibly pushdown languages, which, so far, constituted the ends of two pillars of decidable program logics.

Here we show that this logic is not only more expressive than either of its two fragments, but in fact even more expressive
than their union. Hence, the decidability border amongst program logics has been properly pushed up. We complete the picture
by providing results separating all the \PDL{}-based and modal fixpoint logics with regular, visibly pushdown and arbitrary
context-free constructions. 
\end{abstract}

\section{Introduction}

\paragraph*{Program Logics.}
Modal logics play a major role in formal program specification and verification, least because modal logics are tightly linked
to the notion of bisimulation-invariance which is deemed to be \emph{the} notion of behavioural equivalence for state-based
programs, resp.\ systems dynamically evolving over time. 

Basic modal logic is inadequate for formal program specification since it lacks the ability to even express the simplest forms
of program correctness like safety (``\emph{no bad state can be reached}''), termination (``\emph{every run finally ends}''), etc.
This is of course due to the fact that a basic modal formula of modal depth $k$ can only ``see'' the next $k$ levels of successors
in a transition system. 

This shortcoming has led to the design and study of several extensions of basic modal logic for formal specification and verification.
Ultimately, the key to retaining bisimulation-invariance but increasing the expressive power to include typical (un-)desired 
program properties is the addition of fixpoint constructs. These come in one of two forms: either as explicit fixpoint quantifiers
-- the most prominent example of such an extension being the modal $\mu$-calculus \mucalc \cite{Kozen83} -- or implicitly in the
form of temporal operators -- the most prominent examples here being temporal logics like \LTL \cite{Pnueli:1977} and \CTL
\cite{ClEm81}. 

Another form of implicit fixpoint operator(s) in an extension of basic modal logic is found in Propositional Dynamic Logic (\PDL{}). 
This comprises, in fact, a family of formalisms, parametrised by generalisations of the accessibility relation in a labelled
transition system (LTS), represented as a class of formal languages. Not surprisingly, the most prominent example of this family is
\PDLREG, typically just called \PDL{}, which can be seen as basic modal logic over an Kleene algebra of accessibility relations in an 
LTS \cite{Fischer79}.

The expressiveness of the logics mentioned so far is well-understood. Most notably, they are all incomparable in expressiveness, 
and all of them can be embedded into \mucalc which is therefore strictly more expressive than any of them. Example properties witnessing
the strictness are also well-known, and their inexpressibility in one of these logics is typically not difficult to prove formally:
\begin{itemize}
\item \PDL{} can only combine eventuality properties with existential path quantification; hence, it cannot express the \CTL-property 
      $\mathsf{AF}q$ stating ``\emph{$q$ holds on all paths at some point}'' (which is also expressible in \LTL);
\item \LTL can only quantify over all paths on the top-level, hence it cannot state $\mathsf{EX}q \wedge \mathsf{EX}\neg q$ stating
      ``\emph{there is a successor satisfying $q$ and one that does not satisfy $q$}'' (which is also expressible in \PDL{} as
      $\mudiam{-}q \wedge \mudiam{-}\neg q$;
\item \CTL cannot state ``\emph{$q$ holds only finitely often on all paths}'' which is possible in \LTL, and it cannot say 
      ``\emph{$q$ holds after every even number of steps}'' which is expressed by the \PDL{} formula $\mubox{(\Sigma\Sigma)^*}q$ 
      over LTS with edge labels from $\Sigma$. 
\end{itemize}

\paragraph*{Non-Regular Program Logics.}
The fact that \mucalc embeds them all means that their expressiveness is limited by \emph{regularity} in the sense that each property
definable in these logics can also be specified by a finite tree automaton or a (bisimulation-invariant) formula of Monadic Second-Order
Logic. While regular expressiveness is sufficient for many program specification and verification tasks in the form of safety, liveness,
fairness properties, there are situations in formal verification where expressiveness beyond regularity is required. This has led to 
the design of specification logics beyond \mucalc or \PDL{}. 
\begin{itemize}
\item A non-regular \PDL{}-like specification logic is easily obtained by extending the Kleene algebra of accessibility
      relations, resp.\ the class of regular languages in modal operators, to larger language classes like the context-free ones, 
      resulting in \PDLCFL \cite{JCSS::HarelPS1983}. It can state properties like ``\emph{there is a path of the form $a^nb^n$ for
      some $n \ge 1$}''.
\item Fixpoint Logic with Chop (\FLC) \cite{Mueller-Olm:1999:MFL} extends \mucalc with an operator for sequential composition. This 
      enables it to express context-free properties like the only mentioned above for \PDLCFL but also other non-regular ones like
      ``\emph{all paths end after the same number of steps}.'' In fact, \FLC embeds \PDLCFL \cite{langesomla-ipl06}.
\item Assume-guarantee properties like ``\emph{every execution of program $P$ by $n$ steps, can be matched by an execution of program
      $Q$ by $n+1$ steps}'' are not-regular and have led to invention of \emph{Higher-Order Fixpoint Logic} (\HFL{}) \cite{viswan:hfl:2004}, 
      an extension of \mucalc by a typed $\lambda$-calculus. This captures \FLC on a very low type level and stretches far beyond that.
\end{itemize} 
A common feature of such extensions -- at least when not done carefully -- is the loss of decidability of the satisfiability problem: 
\PDLCFL is highly undecidable \cite{JCSS::HarelPS1983}, and this transfers to \FLC and \HFL{}. On the other hand, their model checking
problems over finite LTS remains decidable, making them suitable for automatic program verification. For \PDLCFL it is even polynomial
\cite{lange-mcpdl}, whereas for \FLC is it \EXPTIME-complete \cite{lange:3mcflc:06}, and for general \HFL{} it is non-elementary 
\cite{als-mchfl07}.  

\paragraph*{Decidable Non-Regular Program Logics.}
A discovery in formal language theory has opened up some interesting possibilities in the realm of non-regular program logics: the
class of \emph{visibly pushdown languages} (\VPL) over some visibly pushdown alphabet partitioning alphabet symbols into those that 
cause push-, pop- and internal state changes in a corresponding pushdown automaton, forms a subset of \CFL\ that enjoys almost the 
same closure and decidability properties as the class \REG\ \cite{STOC04:202}. This is even robust in the sense that the corresponding
\PDL{} over this class, \PDLVPL, is a genuinely non-regular program logic whose satisfiability problem is actually decidable, namely
\TwoEXPTIME-complete \cite{journals/jlp/LodingLS07}.

This comprised the state-of-the-art until recently: the largest (w.r.t.\ expressiveness) explicit-fixpoint logic that was still 
known to be decidable was \mucalc, as known extensions thereof were undecidable. Regarding implicit-fixpoint logics in the form of
propositional dynamic ones, it was possible to push the decidability frontier genuinely into the realm of non-regular logics by
the definition of \PDLVPL. A similar construction was possible for a temporal logic called \emph{Recursive \CTL} (\RecCTL) 
\cite{conf/time/BruseL20} which can be seen as a \CTL-like variant of \FLC: it was possible to define a (non-regular) fragment 
which could be shown to be decidable by a simple satisfiability-preserving translation into \PDLVPL \cite{conf/time/BruseL20}.

A common feature of these decidable logics is the modular use of visibly-pushdown effects. This is best seen in \PDLVPL where
\VPL\ can either be used inside an existential or a universal modality, but no mixing of modalities across visibly pushdown languages
is possible. Hence, \PDLVPL can express properties like $\mudiam{a^nb^n}p$, but cannot state ``\emph{there is an $a$-path of length
$n$ for some $n$, such that all following $b$-paths of length $n$ for the same $n$ end in a state satisfying $p$}'' because this 
requires a change of modality \emph{within} the \VPL\ $a^nb^n$. We will write $\mudiam{a^n}\mubox{b^n}$ to denote this property 
succinctly, even though this is no well-formed formula of any (\PDL{}-like) logic.

Since decidability of \VPL-based logics can be obtained by a reduction to visibly pushdown games \cite{conf/fsttcs/LodingMS04}, there
is little reason to believe that the strict use of \VPL within a single modality is actually required for decidability. In fact,
very recently a fragment of \FLC, called \vpFLC, has been constructed which allows free use of modality changes within \VPL{}s, and
whose satisfiability problem is still decidable \cite{conf/concur/BruseL21}.

\paragraph*{Contribution.} 
The aim of this paper is to investigate the expressive power of program logics, specifically those around the decidability border,
and to show that the existing inclusions between them are strict. This is already known for a few of them, either because of 
long-standing results like the strict inclusions between regular logics and their non-regular extensions, or because of rather
trivial observations. Note for instance that it has been known for a long time that \PDLCFL and \mucalc are incomparable w.r.t.\
expressiveness. Hence, \FLC, which subsumes them both, must subsume both of them strictly. This is fair enough, but also slightly
non-satisfactory, when one considers examples witnessing the strictness that result from this kind of reasoning.
\begin{itemize}
\item \PDLCFL is designed to express some non-regular properties, whereas \mucalc can only express regular ones. Hence, any genuinely
      non-regular property in \FLC witnesses the strictness of the inclusion of \mucalc in \FLC.
\item On the other hand, \PDL{}-based logics cannot express properties that involve unbounded modality alternation like 
      $(\mudiam{a}\mubox{b})^n := \mu X. p \vee \mudiam{a}\mubox{b} X$ (defining the winning region for one of the players in a turn-based 
      two-player game). Hence, any such formula also witnesses the strict inclusion of \PDLCFL in \FLC. This is not very satisfactory, 
      as this does not shed light onto the true non-regular difference of these genuinely non-regular program logics.  
\end{itemize} 
In this paper we complete the study into the (non-expressiveness) of properties expressible in program logics around the decidability
border. The hierarchy formed by them is shown in Fig.~\ref{fig:hierarchy}, naming specific properties witnessing the strictness of the
corresponding inclusion as well as pointing to their origin. The picture is made complete by results in this paper, providing new 
separation results as well as tighter separation results in the form of witnessing properties which 
genuinely do not rely on fairly trivial inexpressibility results between the regular and non-regular world. To denote these properties
we use a fairly intuitive notation like $\mudiam{a^nb^n}$ etc. In the end, we can see that all the inclusions in Fig.~\ref{fig:hierarchy}
are strict. In particular,
\begin{itemize}
\item \ldots propositional dynamic logics and modal fixpoint logics can be separated using using properties of unbounded modality
      alternation (e.g.\ $(\mudiam{a}\mubox{b})^n$) or, further up in the hierarchy, those in which modal alternation and limited
      counting is intertwined (e.g.\ $\mudiam{a^n}\mubox{b^n}$ or $\mudiam{a^n}\mubox{b}\mudiam{a^n}$). As a result, \vpFLC is not only strictly more expressive than 
      \PDLVPL and \mucalc (which was known before), but even strictly more expressive than their union.  
\item \ldots the visibly pushdown based logics (middle band) can be separated from the general non-regular ones (top band) using properties
      based on visibly pushdown languages. This may not sound surprising but is not a triviality as program properties using 
      non-visibly-pushdown languages could, in theory, be composable by complex formulas using visibly pushdown 
      languages only. Here we show that this is indeed not the case.
\end{itemize}

\begin{figure}
\begin{center}
\begin{tikzpicture}[semithick, node distance=25mm]
  \tikzstyle{logic}=[shape=rectangle,rounded corners,semithick,draw,fill=gray!20]
  
  \node[logic] (pdl)                      {\PDL{}};
  \node[logic] (mucalc) at ++(5,1)        {\mucalc};
  \node[logic] (pdlvpl) [above of=pdl]    {\PDLVPL};
  \node[logic] (pdlcfl) [above of=pdlvpl] {\PDLCFL};
  \node[logic] (vpflc)  [above of=mucalc] {\vpFLC};
  \node[logic] (flc)    [above of=vpflc]  {\FLC};
  
  \path (pdl) edge[dashed] node [above,sloped] {$(\mudiam{a}\mubox{b})^n$} node [below,sloped] {\cite{Kozen83}} (mucalc)
              edge[dashed] node [left,pos=.4]  {$\mudiam{a^nb^n}$}  node [right,pos=.4] {\cite{JCSS::HarelPS1983}} (pdlvpl)
        (pdlvpl) edge[dashed] node [above,sloped] {$\mudiam{a^n}\mubox{b^n}$} node [below,sloped] {Thm.~\ref{thm:diabox}} (vpflc)
                 edge[dashed] node [left,pos=.4]  {$\mudiam{a^nba^n}$}  node [right,pos=.4] {Thm.~\ref{thm:sepundec}} (pdlcfl)
        (pdlcfl) edge[dashed] node [above,sloped] {$\mudiam{a^n}\mubox{b}\mudiam{a^n}$} node [below,sloped] {Thm.~\ref{thm:an-b-an}} (flc)
        (mucalc) edge[dashed] node [left,pos=.5] {$\mudiam{a^nb^n}$} node [right,pos=.5] {\cite{Mueller-Olm:1999:MFL}} (vpflc)
        (vpflc) edge[dashed] node [left,pos=.5] {$\mudiam{a^nba^n}$} node [right,pos=.5] {Thm.~\ref{thm:sepundec}} (flc);
        
  \path[dotted] (-2,1.7) edge (10,1.7)
                (-2,4.2) edge (10,4.2); 
                 
  \node[anchor=west] at (6.5,1.2) {regular expressiveness, decidability};  
  \node[anchor=west] at (6.5,3.7) {non-regular expressiveness, decidability};  
  \node[anchor=west] at (6.5,6) {non-regular expressiveness, undecidability};  
\end{tikzpicture}
\end{center}
\caption{The (strict) hierarchy of propositional dynamic and modal fixpoint logics betwen \PDL{} and \FLC.} 
\label{fig:hierarchy}
\end{figure}
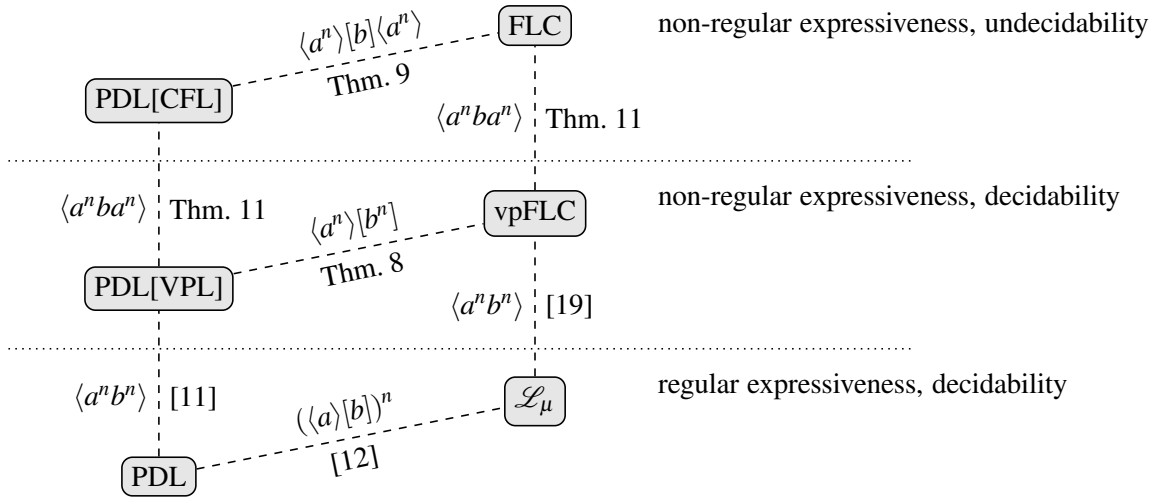

\paragraph*{Organisation.} In Sect.~\ref{sec:prel} we recall necessary preliminaries from formal languages, propositional dynamic and 
modal fixpoint logics. In Sect.~\ref{sec:sepmfpfrompdl} we provide the missing results that separate the propositional dynamic logics
from the modal fixpoint logics, i.e.\ the left column from the right column in Fig.~\ref{fig:hierarchy}. This concerns the non-regular
parts, as the separation of \mucalc from \PDL{} is well-known.

In Sect.~\ref{sec:sepdecfromundec} we then show how to leverage the undecidability of the satisfiability problem for such logics into
an expressiveness gap, thus separating the top row from the middle row in Fig.~\ref{fig:hierarchy}. Again, the separation of the middle
row from the bottom row was known already, with proofs relying for instance on the finite model properties of the regular logics
\PDL{} and \mucalc. We conclude with remarks on further work in Sect.~

\section{Preliminaries}
\label{sec:prel}

\subsection{Languages}

An \emph{alphabet} is a finite, nonempty set of \emph{letters}. A \emph{word} over some 
alphabet $\Sigma$ is a finite sequence $w = w_1\dotsb w_n$ of letters from $\Sigma$. The
 empty word is denoted by $\eword$. The length $|w|$ of $w = w_1\dotsb w_n$ is $n$. The set
 of all $\Sigma$-words is denoted by $\Sigma^*$, the set of all non-empty words is denoted 
by $\Sigma^+$. A \emph{$\Sigma$-language} is a subset of $\Sigma^*$. We denote concatenation
 of words and languages by juxtaposition as usual. 

A \emph{visibly pushdown alphabet} is an alphabet $\Sigma$ that is partitioned into three 
sets $\Sigma = \Sigma_{\call} \cup \Sigma_{\ret} \cup \Sigma_{\internal}$ of \emph{call}, 
\emph{return} and \emph{internal} symbols.

\paragraph*{Finite Automata.}

A (nondeterministic) \emph{finite automaton} (NFA) is a tuple $(Q, \Sigma, \delta, q_I, F)$
 where $Q$ is a finite, nonempty set of states, $\Sigma$ is an alphabet, $q_I \in Q$ is the
 \emph{initial state}, $F \subseteq Q$ is the set of accepting states and 
$\delta \subseteq Q \times \Sigma \times Q$ is the transition relation.

A \emph{run} of some NFA $\aut{A}$ on a word $w = a_1\dotsb a_n \in \Sigma^*$ is a sequence 
$q_0,\dotsc,q_n \in Q^*$ such that, for all $0 \leq i < n$, we have that $(q_i, a_i, q_{i+1}) \in \delta$. 
We also say that $\aut{A}$ has a run from $q_0$ to $q_n$ over $w$. Such a run is called accepting if 
$q_0 = q_I$ and $q_n \in F$. A word $w$ is accepted by $\aut{A}$ if there is an accepting run of 
$\aut{A}$ on $w$. We write $L(\aut{A})$ for the set of words accepted by $\aut{A}$. A finite 
automaton $(Q, \Sigma, \delta, q_i, F)$ is called \emph{deterministic}, if for all $q \in Q, a \in \Sigma$, 
there is exactly one $q' \in Q$ such that $(q, a, q') \in \delta$. It is well-known that, for each NFA, 
there is a DFA that accepts exactly the same language, and has at most exponentially more states. A language 
is called \emph{regular} if there is a finite automaton that accepts it. We write $\REG$ for the class of regular 
languages (over a given alphabet).

\paragraph*{Pushdown Automata.}

A \emph{pushdown automaton} is a tuple $(Q, \Sigma, \Gamma, \stdown, \delta, q_I, F)$ where $Q, \Sigma, q_I, F$
 are as in the case of finite automata,  $\Gamma$ is the \emph{stack alphabet} such that 
$\Gamma \cap \Sigma = \emptyset$, $\stdown \notin \Gamma \cup \Sigma$ is the stack bottom symbol,  
$\delta \subseteq (Q \times \Sigma \times \Gamma \times \Gamma^* \times Q) \cup 
(Q \times \Sigma \times \{\stdown\} \times \Gamma^* \times Q)$ is the transition relation. 

A \emph{run} of some pushdown automaton (PDA) $\mathcal{A}$ on a word $w = a_1\dotsb a_n \in \Sigma^*$ is 
a finite sequence $(q_0, \gamma_0),\dotsc,(q_n, \gamma_n) \in (Q \times \Gamma^*)^*$ of states and stack 
contents such that, for all $0 \leq i < n$ either $\gamma_i = \eword$ and $(q_i, a_i, \stdown, \gamma_{i+1}, q_{i+1}) 
\in \delta$ or $\gamma_i = \gamma'_i \gamma'$ with $|\gamma'| = 1$ and $\gamma_{i+1} = \gamma'_i \gamma''$ 
and $(q_i, a_i, \gamma', \gamma'', q_{i+1}) \in \delta$. A run is \emph{accepting} if $q_0 = q_I, \gamma_0 = \eword$ 
and $q_n \in F$. A word $w$ is accepted by $\mathcal{A}$ if there is an accepting run of $\mathcal{A}$ on $w$. 
We write $L(\mathcal{A})$ for the language of words accepted by $\mathcal{A}$. A language is called \emph{context-free}
 if it is accepted by some PDA. We write $\CFL$ for the class of context-free languages. Note that every NFA can be extended 
to a PDA by ignoring the stack, hence every regular language is context-free. Also, context-free languages over a unary 
(size $1$) alphabet are known to be regular \cite{JACM::Parikh66}.

A \emph{visibly pushdown automaton} (VPA) \cite{Mehlhorn/80,STOC04:202} is a pushdown automaton such that its alphabet 
$\Sigma$ is a visibly pushdown alphabet and, moreover, for $(q, a, \gamma, \gamma', q') \in \delta$, we have that either 
\begin{itemize}
\item $a \in \Sigma_{\call}$ and $|\gamma'| = 2$,
\item $a \in \Sigma_{\ret}$, $\gamma \not = \stdown$, and $|\gamma'| = 0$,
\item $a \in \Sigma_{\internal}$ and $|\gamma'| = 1$,
\end{itemize}
and, finally, a run is only accepting if the final configuration has an empty stack.
A language over some visibly pushdown alphabet is called \emph{visibly pushdown} if it is accepted by some VPA. We
write $\VPL$ for the class of visibly pushdown languages (over a given visibly pushdown alphabet), Clearly 
every visibly pushdown language is context-free. Moreover, every regular language is visibly-pushdown over the alphabet
that regards all letters as internal.

Note that this definition introduces visibly pushdown languages without so-called pending calls and returns, i.e.\ 
such that only words are accepted that are balanced and well-nested w.r.t.\ symbols from $\Sigma_{\call}$ and $\Sigma_{\ret}$. 
This is done to make the definition of \vpFLC (see Sec.~\ref{sec:flc} below) more accessible.

\paragraph*{Derivatives.}

Let $L$ be a $\Progs$-language and let $a \in \Progs$. The $a$-derivative of $L$ is the set 
$\deriv{a}{L} := \{w \in \Progs^* \mid aw \in L\}$ \cite{Rabin:Scott:ibmjrd:1959}.
It is easy to see that derivatives of regular languages are regular again by manipulating the associated 
deterministic finite automaton, i.e.\ making the $a$-successor of the initial state the new initial state, cf.\ 
\cite{JACM::Brzozowski64}. 

We now argue that the $a$-derivative of a context-free language is context-free again. 

\begin{lemma}
\label{lem:derivatives}
Let $L \in \CFL$ over some alphabet $\Sigma$ and $a \in \Sigma$. Then $\deriv{a}{L} \in \CFL$.
\end{lemma}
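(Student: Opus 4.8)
The plan is to take a PDA $\aut{A} = (Q, \Sigma, \Gamma, \stdown, \delta, q_I, F)$ with $L(\aut{A}) = L$ and construct a PDA $\aut{A}_a$ with $L(\aut{A}_a) = \deriv{a}{L}$. The idea is that $\aut{A}_a$ should simulate $\aut{A}$ on an input $w$ exactly as $\aut{A}$ would behave after reading $a$ on input $aw$. First I would add a fresh initial state $q_I'$, and for the first step of any run let $q_I'$ guess, via $\eword$-moves, which configuration $\aut{A}$ reaches after reading the single letter $a$ from $(q_I, \eword)$; from then on $\aut{A}_a$ behaves exactly like $\aut{A}$. Concretely, for every transition of $\aut{A}$ that consumes $a$ from the initial configuration -- i.e.\ every $(q_I, a, \stdown, \gamma'', q_1) \in \delta$ (note $\gamma_0 = \eword$ forces the stack-bottom case) -- we add to $\aut{A}_a$ an $\eword$-transition from $(q_I', \eword)$ that moves to $(q_1, \gamma'')$. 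Technically the transition format in the excerpt only consumes letters, not $\eword$, so I would either (i) invoke the standard fact that PDA with $\eword$-moves accept exactly \CFL, or (ii) avoid $\eword$-moves by folding the guessed first letter of $w$ into the new initial transitions: for each letter $b \in \Sigma$ and each pair of composable transitions $(q_I, a, \stdown, \gamma'', q_1)$ followed by the $b$-step of $\aut{A}$ from $(q_1, \gamma'')$, add a corresponding $b$-transition out of $q_I'$. Either way the construction is routine bookkeeping.

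The key steps, in order: (1) fix a PDA for $L$; (2) describe the fresh initial state and the new transitions encoding ``first read $a$''; (3) argue the correspondence between accepting runs of $\aut{A}_a$ on $w$ and accepting runs of $\aut{A}$ on $aw$, by noting that a run of $\aut{A}$ on $aw$ necessarily starts in $(q_I, \eword)$, takes one step consuming $a$ to some $(q_1, \gamma_1)$, and then proceeds on $w$ exactly as the tail of an $\aut{A}_a$-run from $(q_1, \gamma_1)$ -- so the map between runs is a bijection preserving acceptance; (4) conclude $L(\aut{A}_a) = \deriv{a}{L}$, hence $\deriv{a}{L} \in \CFL$. One small edge case worth a remark: the empty word is in $\deriv{a}{L}$ iff $a \in L$, which is handled by allowing $q_I'$ itself to be accepting in exactly that case (equivalently, the $b$-folding variant needs an explicit check for $w = \eword$).

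I do not expect a genuine obstacle here; the only mild friction is that the paper's PDA model has no $\eword$-transitions and forces $\gamma_0 = \eword$, so one must be slightly careful to either justify $\eword$-moves or push the first input letter of $w$ through the new initial transitions. Both are standard, so the ``hard'' part is merely choosing a presentation that stays within the model as defined. I would opt for the $\eword$-move version with a one-line appeal to the equivalence of PDA with and without $\eword$-moves, keeping the argument short, and mention the $w = \eword$ case explicitly so the statement $\deriv{a}{L} \in \CFL$ holds without caveats.
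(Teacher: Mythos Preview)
Your proposal is correct and follows essentially the same approach as the paper: both take a PDA for $L$, introduce a fresh initial state, and hard-wire the effect of the first $a$-step into the outgoing transitions of that state (your option~(ii) is precisely this). The only cosmetic difference is that the paper builds one automaton $\aut{A}_t$ per possible first $a$-transition $t$ and then appeals to closure of \CFL\ under union, whereas you would merge these nondeterministic choices into a single PDA; both are standard and equally valid.
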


\begin{proof}
Suppose $\aut{A} = (Q, \Sigma, \Gamma, \stdown, \delta, q_I, F)$ is a PDA with $L(\aut{A}) = L$. Let 
$T = \{(q_i, a, \stdown, \gamma, q) \in \delta \mid \gamma \in \Gamma^*, q \in Q \}$ be the set of transitions 
available to $\aut{A}$ upon reaching an $a$ as the first letter of a word. For each $t =(q_i, a, \stdown, \gamma, q)  \in T$,
if $\gamma = \eword$, define the automaton $\aut{A}_t = (Q \cup \{q'\}, \Sigma, \Gamma, \stdown, \delta_t, q', F)$ with 
\[
\delta_t = \delta \cup \{ (q', b, \stdown, \gamma', q'') \mid (q', b, \stdown, \gamma', q'') \in \delta \}
\]
and if $\gamma = \gamma' \gamma''$ with $|\gamma''| = 1$, 
define the automaton $\aut{A}_t = (Q \cup \{q'\}, \Sigma, \Gamma, \stdown, \delta_t, q', F)$ with 
\[
\delta_t = \delta \cup \{ (q', b, \stdown, \gamma'\gamma''', q'') \mid (q', b, \gamma'', \gamma''', q'') \in \delta\}. 
\]
Intuitively, $\aut{A}_t$ simulates the case of $\aut{A}$ reading an $a$ and taking $t$ as its first transition,
and then proceeds like $\aut{A}$ would on the rest of the word. Hence, $\aut{A}_t$ accepts only words in $\deriv{a}{L(\aut{A})}$. 

Conversely, each word in $\deriv{a}{L(\aut{A})}$ is accepted by at least one of the
$\aut{A}_t$. Using the well-known closure of context-free languages under union, we obtain the desired statement. 
\end{proof}

\subsection{Propositional Dynamic Logic}

\paragraph*{Labelled Transition Systems.}

Let $\Prop$ be a set of atomic propositions and let $\Progs$ be an alphabet, in the context of propositional dynamic logics often 
referred to as the set of \emph{atomic programs}. 

A \emph{labelled transition system} 
(LTS) is a tuple $\Transsys = (\States, \Transition{}{}{}, \Label)$ where $\States$ is a, potentially infinite, set of 
\emph{states}, $\Transition{}{}{} \subseteq \States \times \Progs \times \States$ is the transition relation, and 
$\Label \colon \States \to 2^\Prop$ labels each state with the set of propositions that hold at it. Given an LTS, we write
 $\Transition{s}{a}{t}$ to denote that $(s, a, t) \in \Transition{}{}{}$. This extends to $\Progs$-words via 
$\Transition{s}{w_1 w_2}{t}$ iff there is $s'$ with $\Transition{s}{w_1}{s'}$ and $\Transition{s'}{w_2}{t}$, and 
$\Transition{s}{\eword}{t}$ iff $s = t$. We write $\Transition{s}{L}{t}$ if there is $w \in L$ such that $\Transition{s}{w}{t}$.
Sometimes we consider labelled transition systems with a designated initial state. In a drawing, this will be marked by an 
ingoing edge with no source. 

A (finite) path $\pi$ in an LTS is a sequence $s_1,\dotsc,s_n$ of states such that, for all $1 \leq i < n$, there is 
$a_i \in \Progs$ such that $\Transition{s_i}{a_i}{s_{i+1}}$. In this case, the path is said to be labelled by 
$w = a_1\dotsb a_{n-1}$. Note that $\Transition{s}{w}{t}$ iff there is a $w$-labelled path from $s$ to $t$.

\paragraph{Syntax.}

Let $\Prop$ and $\Progs$ be as before. Let $\lclass$ be a, not necessarily finite, set of $\Progs$-languages. The syntax
 of Propositional Dynamic Logic over $\lclass$ ($\PDL{\lclass}$) is defined by the following grammar:
\[
\varphi \quad ::= \quad p \mid  \neg \varphi \mid \varphi \vee \varphi \mid \varphi  \wedge \varphi  \mid \mudiam{L}\varphi \mid \mubox{L} \varphi
\]
where $p \in \Prop$ and $L \in \lclass$. The auxiliary formulas $\mutrue$ and $\mufalse$ are defined as usual via $p \vee \neg p$, resp.\ $p \wedge \neg p$
for arbitrary $p \in \Prop$. We are particularly interested in those logics in which $\lclass = \REG, \VPL$ or $\CFL$,
resulting in the logics 
\begin{itemize}
\item \emph{Propositional Dynamic Logic of Regular Programs} (\PDLREG) \cite{Fischer79}, sometimes only called \PDL{}, 
\item \emph{Propositional Dynamic Logic of Context-Free Programs}\footnote{Originally it was called 
      \emph{Propositional Dynamic Logic of Non-Regular Programs} which is of course slightly misleading as it hardly comprises
      \emph{all} non-regular programs.} (\PDLCFL) \cite{JCSS::HarelPS1983}, and
\item \emph{Propositional Dynamic Logic of Recursive Programs} (\PDLVPL) \cite{journals/jlp/LodingLS07}.
\end{itemize}

The \emph{modal depth} $\md{\varphi}$ of a formula $\varphi$ measures the nesting depth of modal operators and is defined inductively via
\begin{align*}
\md{p}  &= 0 \\
\md{\mudiam{L}\varphi} = \md{\mubox{L}\varphi} = \md{\neg \varphi} &= 1+ \md{\varphi} \\
\md{\varphi_1 \vee \varphi_2} = \md{\varphi_1 \wedge \varphi_2} &= \max(\md{\varphi_1}, \md{\varphi_2})
\end{align*}

\paragraph*{Semantics.}

Given an LTS $\Transsys = (\States, \Transition{}{}{}, \Label)$, any $\PDL{\lclass}$-formula over matching $\Prop$ and 
$\Progs$ defines a subset $\sem{\varphi}{\Transsys}{}$ of $\States$, inductively defined via
\begin{align*}
\sem{p}{\Transsys}{} &= \{s \in \States \mid p \in \Label(s)\} & \sem{\neg \varphi}{\Transsys}{} &= \States \setminus \sem{\varphi}{\Transsys}{} \\
\sem{\varphi_1 \vee \varphi_2}{\Transsys}{} &= \sem{\varphi_1}{\Transsys}{} \cup \sem{\varphi_2}{\Transsys}{} &  
\sem{\varphi_1 \wedge \varphi_2}{\Transsys}{} &= \sem{\varphi_1}{\Transsys}{} \cup \sem{\varphi_2}{\Transsys}{} \\
\sem{\mudiam{L}\varphi}{\Transsys}{} &= \{s \mid \text{ ex. } t \in \sem{\varphi}{\Transsys}{}, \text{ s.t. } \Transition{s}{L}{t}\} & 
\sem{\mubox{L}\varphi}{\Transsys}{} &= \{s \mid \text{ f.a. } t \text{ s.t. } \Transition{s}{L}{t}, \text{ then } t \in \sem{\varphi}{\Transsys}{} \}.
\end{align*}

We write $\Transsys, s \models \varphi$ to denote that $s \in \sem{\varphi}{\Transsys}{}$. We say that $\Transsys$ 
is a model of $\varphi$ to denote that the initial state of $\Transsys$ is in $\sem{\varphi}{\Transsys}{}$.

\subsection{Fixpoint Logic with Chop}
\label{sec:flc}

We assume some familiarity with the modal $\mu$-calculus \mucalc.
The yardstick in the world of modal fixpoint logics to measure the expressive power of propositional dynamic logics against has
turned out to be Fixpoint Logic with Chop \cite{Mueller-Olm:1999:MFL}, the extension of \mucalc by a sequential composition operator. 
This is since some kind of sequential composition operator is needed in many language classes beyond the regular 
languages, and \FLC is, in some sense, a minimal extension of \mucalc by such an operator.
As \FLC is by far less known than \mucalc, we include its definition here. It is also needed to explain its fragment \vpFLC, comprising
the largest currently known modal fixpoint logic with a decidable satisfiability problem \cite{conf/concur/BruseL21}.

\paragraph*{Syntax.}
Let $\Prop$ and $\Progs$ be as above. Let $\Var$ be a countably infinite set of variable names. Formulas of 
\FLC over $\Prop$, $\Progs$ and $\Var$ are given by the following grammar.
\begin{displaymath}
\varphi \quad ::= \quad q \mid \overline{q} \mid X \mid \term \mid \mudiam{a} \mid \mubox{a} \mid \varphi \vee \varphi \mid 
\varphi \wedge \varphi \mid \mu X.\varphi \mid \nu X.\varphi \mid \varphi ; \varphi 
\end{displaymath}
where $q \in \Prop$, $a \in \Progs$ and $X \in \Var$. Note that \FLC does not have a negation operator.

\paragraph*{Semantics.}

Let $\Transsys = (\States,\Transition{}{}{},\Label)$ be an LTS. An \emph{environment} $\eta: \Var \to (2^\States \to 2^\States)$ 
assigns to each variable a function from sets of states to sets of states in $\Transsys$. We write $\eta[X \mapsto f]$ for the 
function that maps $X$ to $f$ and agrees with $\eta$ on all other arguments. 

The semantics $\sem{\cdot}{\eta}{\Transsys}: 2^\States \to 2^\States$ of an \FLC formula, relative to an LTS $\Transsys$ and
an environment, is such a function. It is monotone with respect to the inclusion ordering on 
$2^\States$. Such functions together with the partial order given by
\begin{displaymath}
f \sqsubseteq g \quad \mbox{iff} \quad \forall T \subseteq \States: f(T) \subseteq g(T)
\end{displaymath}
form a complete lattice with joins $\sqcup$ and meets $\sqcap$ -- defined as the pointwise union, resp.\ intersection. By the 
Knaster-Tarski Theorem \cite{Tars55} the least and greatest fixpoints of monotone functionals 
$F: (2^\States \to 2^\States) \to (2^\States \to 2^\States)$ 
exist. They are used to interpret fixpoint formulas of \FLC. The semantics is then inductively defined as follows.
\begin{align*} 
\sem{q}{\Transsys}{\eta} & = \underline{\enspace} \mapsto \{ s \in \States \mid q \in \Label(s) \} &
\sem{\mudiam{a}}{\Transsys}{\eta} & = T \mapsto \{ s \in \States \mid \exists t \in T \mbox{ s.t. } \Transition{s}{a}{t} \} \\
\sem{\overline{q}}{\Transsys}{\eta} & = \underline{\enspace} \mapsto \{ s \in \States \mid q \not\in \Label(s) \} &
\sem{\mubox{a}}{\Transsys}{\eta} & = T \mapsto \{ s \in \States \mid \forall t \in T: \Transition{s}{a}{t} \Rightarrow t \in T \} \\
\sem{X}{\Transsys}{\eta} & = \eta(Z) &
\sem{\mu X.\varphi}{\Transsys}{\eta} & = \bigsqcap \{ f: 2^\States \to 2^\States \mid f \mbox{ mon., }
                        \sem{\varphi}{\Transsys}{\eta[X \mapsto f]} \sqsubseteq f \} \\
\sem{\term}{\Transsys}{\eta} & = T \mapsto T &
\sem{\nu X.\varphi}{\Transsys}{\eta} & = \bigsqcup \{ f: 2^\States \to 2^\States \mid f \mbox{ mon., }
                        f \sqsubseteq \sem{\varphi}{\Transsys}{\eta[X \mapsto f]} \} \\
\hspace*{-3mm}\sem{\varphi ; \psi}{\eta}{\Transsys} & = \sem{\varphi}{\Transsys}{\eta} \circ \sem{\psi}{\Transsys}{\eta} &
\sem{\varphi \vee \psi}{\Transsys}{\eta} & = \sem{\varphi}{\Transsys}{\eta} \sqcup \sem{\psi}{\Transsys}{\eta} \\
&& \sem{\varphi \wedge \psi}{\Transsys}{\eta} & = \sem{\varphi}{\Transsys}{\eta} \sqcap \sem{\psi}{\Transsys}{\eta}
\end{align*}
Here, $\circ$ denotes function composition.

For any \FLC formula $\varphi$, any LTS $\Transsys = (\States,\Transition{}{}{},\Label)$  with initial state $s_0$ and any environment $\eta$ let 
$\Sem{\varphi}{\eta}{\Transsys} := \sem{\varphi}{\eta}{\Transsys}(\States)$. 
We call this the set of positions in $t$ \emph{defined} by $\varphi$ and $\eta$. 
We also write $\Transsys,s \models_\eta \varphi$ if
$s \in \Sem{\varphi}{\eta}{\Transsys}$, resp.\ $\Transsys \models_\eta \varphi$ if $\Transsys,s_0 \models_\eta \varphi$. If 
$\varphi$ is closed we may omit $\eta$ in both kinds of notation. We say that $\Transsys$ is a \emph{model} of a closed formula
$\varphi$ if $\Transsys \models \varphi$. A formula is \emph{satisfiable} if it has a model.

Two formulas $\varphi$ and $\psi$ are \emph{equivalent}, written $\varphi \equiv \psi$, iff their semantics are the same, i.e.\ 
for every environment $\eta$ and every LTS $\Transsys$: $\sem{\varphi}{\eta}{\Transsys} = \sem{\psi}{\eta}{\Transsys}$. Two formulas 
$\varphi$ and $\psi$ are \emph{weakly equivalent}, written $\varphi \approx \psi$, iff they define the same set of states
in an LTS, i.e.\ for every $\eta$ and every $\Transsys$ we have $\Sem{\varphi}{\eta}{\Transsys} = \Sem{\psi}{\eta}{\Transsys}$.
Hence, we have $\varphi \approx \varphi;\mutrue$ for any $\varphi$. 

\paragraph*{Visibly Pushdown \FLC.}

\begin{definition}
Let $\Sigma = \Sigma_{\call} \cup \Sigma_{\ret} \cup \Sigma_{\internal}$ be a visibly pushdown alphabet.
The syntax of the fragment \vpFLC of \FLC is given by the following grammar.
\begin{align*}
\varphi \quad ::= \quad &q \mid \overline{q} \mid X \mid \psi \vee \psi \mid \psi \wedge \psi \mid \mu X.\psi \mid \nu X.\psi \mid \\
 &\mudiabox{a_\internal} \mid \mudiabox{a_\internal};\varphi 
 \mid \mudiabox{a_\call};\mudiabox{a_\ret} \mid \mudiabox{a_\call};\varphi;\mudiabox{a_\ret} \mid 
   \mudiabox{a_\call};\mudiabox{a_\ret};\varphi \mid \mudiabox{a_\call};\varphi;\mudiabox{a_\ret};\varphi
\end{align*}
where $q \in \Prop$, $X \in \Var$, $a_m \in \Act_x$ for $m \in \{\internal,\call,\ret\}$, and $\mudiabox{a}$ can be either 
$\mudiam{a}$ or $\mubox{a}$. Furthermore, we postulate that the sequential composition operator is right-associative; parentheses
are not shown explicitly here for the sake of better readability. 

The definition alongside the visibly pushdown alphabet ensures that sequential composition, in particular when involving multiple composition operators,
 appears only ``guarded'' by modal operators. 
Hence, when exploring several formulas at the same time, e.g.\ in a tableau or a satisfiability game (cf.~\cite{conf/concur/BruseL21}), unfolding
modal operators in lockstep will ensure that these formulas always have similar amounts of nested chop operators, and, hence, are of similar size.
This makes such a satisfiability game a stair-parity game and, hence decidable.

It is open whether the definition of \vpFLC can be relaxed to allow pending calls and returns.

\end{definition}

\subsection{Separating Properties}
\label{sec:sepprop}

The properties witnessing the separation results discussed in this paper have -- to some degree -- been mentioned in the 
introduction already, and are also shown in Fig.~\ref{fig:hierarchy}. Here we defined them formally as formulas of the 
corresponding logics.
\begin{proplist}
\item $\bm{\mudiam{a^nb^n}}$. This property simply states ``\emph{there is a path labelled with $n$ $a$'s, followed by
      $n$ $b$'s, for some $n \ge 1$, ending in a state where $p$ holds}''. It should be clear that this can be expressed in \PDLCFL since 
      $L_{a^nb^n} := \{ a^nb^n \mid n \ge 1 \}$ is a \CFL. Hence, $\mudiam{a^nb^n} := \mudiam{L_{a^nb^n}}p$ is a formula doing so.
      
      Note that $L_{a^nb^n}$ is in fact a \VPL\ over the visibly pushdown alphabet with $a$ being a call and $b$ a return symbol.
      Hence $\mudiam{a^n b^n} \in \PDLVPL$.
      
      An \FLC formula formalising this property is $(\mu Z.\mudiam{a};\mudiam{b} \vee \mudiam{a};Z;\mudiam{b});p$. It is best 
      understood by unfolding the fixpoint formula using
      \begin{align*}
        \underbrace{(\mu Z.\mudiam{a};\mudiam{b} \vee \mudiam{a};Z;\mudiam{b})}_{\psi_Z};p 
        &\equiv (\mudiam{a};\mudiam{b} \vee \mudiam{a};\psi_Z;\mudiam{b});p 
         \equiv \mudiam{a};\mudiam{b};p \vee \mudiam{a};\psi_Z;\mudiam{b};p \\
        &\equiv \mudiam{a};\mudiam{b};p \vee \mudiam{a};(\mudiam{a};\mudiam{b} \vee \mudiam{a};\psi_Z;\mudiam{b});\mudiam{b};p \\
        &\equiv \mudiam{a};\mudiam{b};p \vee \mudiam{a};\mudiam{a};\mudiam{b};\mudiam{b};p \vee \mudiam{a};\mudiam{a};\psi_Z;\mudiam{b};\mudiam{b};p \\
        &\enspace \vdots \\
        &\equiv \bigvee\limits_{n \ge 1} \underbrace{\mudiam{a};\ldots,\mudiam{a}}_{n \text{ times}};\underbrace{\mudiam{b};\ldots;\mudiam{b}}_{n \text{ times}};p.
      \end{align*}
			Clearly, this property cannot be expressed in \PDLREG \cite{JCSS::HarelPS1983}.

\item $\bm{\mudiam{a^n b a^n}}$. This is very similar to the previous property with the \CFL\ $L_{a^nba^n} = \{ a^nba^n \mid n \ge 1 \}$ instead, hence 
      $\mudiam{a^n b a^n}$ it is a \PDLCFL property. By the standard translation into \FLC, it is also expressible there, for instance
      as $\varphi_{a^nba^n} := (\mu Z.\mudiam{a};\mudiam{b};\mudiam{a} \vee \mudiam{a};Z;\mudiam{a});p$. Using fixpoint unfolding as before, one can
      see that it is equivalent to $\bigvee_{n \ge 1} \mudiam{a}^n;\mudiam{b};\mudiam{a}^n;p$.
      
      Note that $L_{a^nba^n}$ is not a \VPL\ since $a$ would need to be both a call and a return symbol in order to be recognisable 
      using a pushdown automaton. Likewise, $\varphi_{a^nba^n}$ is not a \vpFLC formula as $\mudiam{a}$ occurs both in front of and
      behind a recursion variable. Again, this would require $a$ to be both a call and a return symbol. We formally show this in Thm.~\ref{thm:sepundec}.
       
\item $\bm{(\mudiam{a}\mubox{b})^n}$. This property is supposed to state something occurring in the definition of winning regions
      in alternating two-player reachability games of unbounded iteration, namely that there is some number $n \ge 1$ of moves such that player 1 can make a 
      move such that no matter how player 2 responds, player 1 can make another move, etc.\ until after $2n$ moves a states satisfying
      $p$ is reached. This is expressed by the \mucalc formula $\mu X.\mudiam{a}\mubox{b}(p \vee X)$ or, using the standard translation
      into \FLC \cite{Mueller-Olm:1999:MFL}, as $(\mu X.\mudiam{a};\mubox{b}(p \vee X));\mutrue$ which is also a \vpFLC formula. 
			However, this property cannot be expressed in \PDLREG \cite{Kozen83}.
      
\item $\bm{\mudiam{a^n}\mubox{b^n}}$. This is similar to the property $\mudiam{a^nb^n}$ but here the second part ``of the path'' is
      universally quantified, i.e.\ it asks for the existing of an $a$-path of some length $n \ge 1$ such that all $b$-paths of length
      $n$ following it end in a state satisfying $p$. This is easily expressed in \FLC by changing the corresponding $\mudiam{b}$ to
      a $\mubox{b}$ in the formula for $\mudiam{a^nb^n}$, resulting in $(\mu Z.\mudiam{a};\mubox{b} \vee \mudiam{a};Z;\mubox{b});p$. 
      Since the original formula was already \vpFLC, so is this one, as \vpFLC treats existential and universal modalities equally.
			However, we show in Thm.~\ref{thm:diabox} that this property cannot be expressed in \PDLCFL.

\item $\bm{\mudiam{a^n}\mubox{b}\mudiam{a^n}}$. This asks for the existence of an $a$ path of length $n$, such that all $b$-successors
      of the target state have an emerging $a$-path of length $n$ again to some state satisfying $p$. Again, an \FLC formula for this
      property is easily obtained from one for $\mudiam{a^n b a^n}$ by changing a corresponding modality, resulting in 
      $(\mu Z.\mudiam{a};\mubox{b};\mudiam{a} \vee \mudiam{a};Z;\mudiam{a});p$. This is also no \vpFLC formula for the same reason that $a$ cannot have
      two roles in a visibly pushdown alphabet. We use this property to separate \PDLCFL and \FLC in Thm.~\ref{thm:an-b-an} by showing that it
			cannot be expressed in the former.
\end{proplist}

\section{Separating Propositional Dynamic and Modal Fixpoint Logics}
\label{sec:sepmfpfrompdl}


In this section we prove that particular properties separate the expressive power of the modal fixpoint logics in the
middle and upper band of Fig.~\ref{fig:hierarchy} from the propositional dynamic logics in these bands, namely the
properties $\mudiam{a^n}\mubox{b}^n$ and $\mudiam{a^n}\mubox{b}\mudiam{a^n}$ as defined in the previous section. The proofs
use a special case of the well-known Pumping Lemma for regular languages \cite{Rabin:Scott:ibmjrd:1959}. Note that this is
used in the context of propositional dynamic logics over context-free languages in the setting where the alphabet is only
unary so that context-free languages boil down to regular ones anyway.

\subsection{The Pumping Lemma for Unary Languages}

The Pumping Lemma for regular languages states that, for any regular language $L$, there is $n$ 
such that any word $w \in L$ with $|w| \geq n$ can be partitioned into $w = uvx$ with $|uv| \leq n$
 and $|v| \geq 1$ such that $uv^i x \in L$ for all $i \in \Nat$. This follows from the fact that 
there must be a finite automaton for $L$, a DFA $\mathcal{A}$ in fact, that accepts it. Any run that
 over a word that is longer that the number of states of $\mathcal{A}$, some state must be visited 
more than once, and the section of the word between these occurrences can be ``pumped'' (up or down).  However,
 the partition depends on the word in question. Moreover, given several regular languages, the 
partitions can differ even for words that are in the intersection of all the languages. The situation
 becomes more predictable in the setting of unary alphabets, i.e.\ those of the form $\{a\}$, in which
 case pumping constants can be found that work for all languages simultaneously. Before we show this, 
we need the following definition:
\begin{definition}
\label{def:transprof}
Let $L_1,\dotsc,L_n$ be regular languages over the same alphabet $\Sigma$, and for $1 \leq i \leq n$, 
let $\aut{A}_i = (Q_i, \Sigma, \delta_i, q^i_I, F_i)$ be finite automata for $L_i$, respectively. A \emph{simultaneous
transition profile} for $\aut{A}_1,\dotsc,\aut{A}_n$ is a relation 
$\tau \subseteq \bigcup_{1\leq i \leq n} (Q_i \times Q_i)$. Each word $w \in \Sigma^*$ defines such a simultaneous transition 
profile via $\tau_a = \bigcup_{1 \leq i \leq n} \{(q,q') \mid (q, a, q') \in \delta_i \}$, 
and $\tau_{av} = \tau_a \tau_v = \bigcup_{1 \leq i \leq n} \{ (q, q') \mid \text{ ex. } q'' \text{ s.t. } (q,q'') \in \tau_a, (q'',q') \in \tau_v\}$.
\end{definition}
Note that  if $q,q'$ are states in $Q_i$, then $(q, q') \in \tau_w$ iff there is a run of $\aut{A}_i$ from $q$ to $q'$ over $w$. 

\begin{lemma}
\label{lem:simpump}
Let $L_1,\dotsc,L_n$ be regular $\Sigma$-languages, and let $a \in \Sigma$. Then there are $m$ and 
$k>0$ such that, for any $l \geq m+k$ and for all $1 \leq i \leq n, j \in \Nat$ we have that 
$a^{l} \in L_i$ iff $a^{l+j\cdot k} \in L_i$.
\end{lemma}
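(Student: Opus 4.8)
The plan is a pigeonhole argument on \emph{simultaneous transition profiles}. First I would fix, for each $1 \le i \le n$, a finite automaton $\aut{A}_i = (Q_i, \Sigma, \delta_i, q^i_I, F_i)$ recognising $L_i$, and consider the profiles $\tau_{a^l}$ for $l = 0, 1, 2, \dots$ in the sense of Definition~\ref{def:transprof}. Each $\tau_{a^l}$ is a relation contained in the \emph{fixed} finite set $\bigcup_{1 \le i \le n}(Q_i \times Q_i)$, so only finitely many profiles exist; hence the infinite sequence $\tau_{a^0}, \tau_{a^1}, \tau_{a^2}, \dots$ must repeat a value. I would pick $0 \le p < p'$ with $\tau_{a^p} = \tau_{a^{p'}}$ and set $m := p$ and $k := p' - p > 0$. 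The point of using one shared profile for all the $\aut{A}_i$ at once is precisely that it yields constants $m, k$ that work for every $L_i$ simultaneously, which is the content of the lemma.

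Next I would show that this $m$ and $k$ make the whole tail of the sequence periodic, i.e.\ $\tau_{a^l} = \tau_{a^{l+k}}$ for every $l \ge m$. This uses the fact that profile composition mirrors word concatenation: $\tau_{a^{s+t}} = \tau_{a^s}\tau_{a^t}$, which is immediate from the definition of $\tau_{av}$ in Definition~\ref{def:transprof}. From $\tau_{a^m} = \tau_{a^{m+k}}$ one then gets, for every $t \ge 0$, that $\tau_{a^{m+t}} = \tau_{a^m}\tau_{a^t} = \tau_{a^{m+k}}\tau_{a^t} = \tau_{a^{m+k+t}}$. A trivial induction on $j$ upgrades this to $\tau_{a^l} = \tau_{a^{l+j\cdot k}}$ for all $l \ge m$ and all $j \in \Nat$; in particular this holds for all $l \ge m+k$ as stated.

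Finally I would translate periodicity of the profiles back into the membership statement. By the remark following Definition~\ref{def:transprof}, for each $i$ we have $a^l \in L_i$ iff there is some $q \in F_i$ with $(q^i_I, q) \in \tau_{a^l}$. Since $\tau_{a^l} = \tau_{a^{l + j\cdot k}}$ whenever $l \ge m$ (hence a fortiori whenever $l \ge m+k$), this condition holds for $a^l$ exactly when it holds for $a^{l+j\cdot k}$, so $a^l \in L_i$ iff $a^{l+j\cdot k} \in L_i$, uniformly in $i$ and $j$.

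I do not expect a genuine obstacle here: the heart of the proof is the finiteness of the profile set plus the composition law. The only points requiring care are the bookkeeping for $\tau_{a^{s+t}} = \tau_{a^s}\tau_{a^t}$ and the exact threshold — the statement's bound $l \ge m+k$ is just a safe weakening of the natural bound $l \ge m$ obtained above, so no extra work is needed for it.
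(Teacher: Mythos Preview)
Your proposal is correct and follows essentially the same approach as the paper: pigeonhole on the finitely many simultaneous transition profiles of Definition~\ref{def:transprof} to obtain ultimate periodicity, then read off the membership equivalence. The only cosmetic difference is that you argue periodicity algebraically via the composition law $\tau_{a^{s+t}} = \tau_{a^s}\tau_{a^t}$, whereas the paper splits an accepting run at the state reached after $a^m$ and splices; your version is arguably cleaner in that both directions of the iff fall out immediately from $\tau_{a^l} = \tau_{a^{l+j\cdot k}}$.
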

\begin{proof}
Let $\aut{A}_1,\dotsc,\aut{A}_n$ be finite automata for $L_1,\dotsc,L_n$ as in Def.~\ref{def:transprof}. 
By cardinality reasons, no more than $2^{|Q_1|^2 + \dotsb + |Q_n|^2}$ many transition profiles exist. 
Hence, if one enumerates the transition profiles for $a, a^2, a^3$ etc., upon reaching a word of the 
length $2^{|Q_1|^2 + \dotsb + |Q_n|^2} + 1$, one transition profile $\tau$ must have occurred twice. 
Note that, if a transition profile occurs twice, the entire sequence of profiles between the two
occurrences will occur again, i.e.\ the sequence of transition profiles is ultimately periodic.
Let $m$ be such that $\tau = \tau_m$ is the first occurrence of this profile, and $k$ such that 
$\tau_{m+k}$ is the second occurrence of this profile. Then $m$ and $k$ are as in the lemma: let 
$l \geq m+k$ and let $w^l \in L_i$. Then $\aut{A}_i$ has an accepting run over $w^l$, i.e.\ a run from
 $q_I^i$ to $q \in F_i$. Let $q'$ be the state in the run after reading $a^{m}$. By definition of $q'$, 
there is a run of $\aut{A}_i$ over $a^{l-m}$ from $q'$ to $q$. Since $\tau_{m} = \tau_{m+k}$, there 
is also a run of $\aut{A}_i$ from $q_I^i$ to $q'$ over $a^{m+k}$. By combining these two runs, we obtain
 a run from $q_I^i$ to $q'$ over $a^{m+k}$, and then a run from $q'$ to $q$ over $a^{l-m}$, which is an
 accepting run over $a^{m+k+l-m} = a^{l+k}$. The rest of the claim is by repeated application of the 
previous argument. 
\end{proof}

\begin{figure}
\begin{tikzpicture}[->, node distance = 1.7cm]
                \node[circle, draw] (14)  {$l$};
                \node[] (15) [right of=14] {\dots};
                \node[circle, draw, label=45:{$p$}] (16) [right of=15] {$0$};
    			\node[circle, draw] (17) [right  = 2cm  of 16] {$l+k$};
    			\node[] (18) [right of=17] {\dots};
                \node[circle, draw] (19) [right of=18] {$l$};
                \node[] (21) [right of=19] {\dots};
                \node[circle, draw, label=45:{$p$}] (22) [right of=21] {$0$};
                \node (23) [left of=14] {};
								\node (24) [left of=17] {};
    
                \path
                (14) edge [above, bend left=20] node {$b$} (15)
                (15) edge [above, bend left=20] node {$b$} (16)
                (17) edge [above, bend left=20] node {$b$} (18)
                (18) edge [above, bend left=20] node {$b$} (19)
                (19) edge [above, bend left=20] node {$b$} (21)
                (21) edge [above, bend left=20] node {$b$} (22)
				(14.south east)
        edge[-, decorate,decoration={brace,mirror,raise=.15cm},"$l \geq (m+k)\cdot d$"below=6pt]
        (14.south east -| 16.south west)
        (17.south east)
        edge[-, decorate,decoration={brace,mirror,raise=.15cm},"$k$"below=6pt]
        (17.south east -| 19.south west)
        (19.south east)
        edge[-, decorate,decoration={brace,mirror,raise=.15cm},"$l \geq (m+k)\cdot d$"below=6pt]
        (19.south east -| 22.south west)
        
        (23) edge [above] node {$\Transsys^b_l$} (14)
        (24) edge [above] node {$\Transsys^b_{l+k}$} (17)
                ;
            \end{tikzpicture}
						\caption{Transition Systems $\Transsys^b_{l}$ and $\Transsys^b_{l+k}$ for $l \geq (m+k)\cdot d$.}
						\label{fig:b-long}
						\end{figure}

For the following lemma, let $\Progs = \{a,b\}, \Prop = \{p\}$ and, for $l \in \Nat$, let 
$\Transsys^b_l$ be defined as $\Transsys^b_l = (\{0,\dotsc,l\},$ $\Transition{}{}{}, \Label)$ 
with $\Transition{}{}{} = \{(i+1, b, i) \mid 0 \leq i \leq l-1\}$ and $\Label(s) = \{p\}$ iff $s = 0$.
See Fig.~\ref{fig:b-long} for a graphical representation.

\begin{lemma}
\label{lem:pumping-b-reg}
Let $\lclass = \{L_1,\dotsc,L_n\}$ where $L_1,\dotsc,L_n$ are regular $\Progs$-languages, $m$ and $k>0$ be their combined
pumping indices according to Lemma~\ref{lem:simpump}, $d \in \Nat$, $0 < d' \leq d$ and $l \geq (m+k)\cdot d$. 
Then, for all $l \geq j \geq (m+k)\cdot d'$, the states $j$ in $\Transsys^b_{l}$ 
and $j+k$ in $\Transsys^b_{l + k}$ satisfy the same $\PDL{\lclass}$ formulas of modal depth 
at most $d'$.
\end{lemma}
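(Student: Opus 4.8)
The plan is to strip both systems down to their essential shape---plain $b$-chains---and then run an induction on modal depth, using the common pumping period $k$ to argue that, far enough down such a chain, a shift by $k$ is invisible to formulas of bounded modal depth. First I reduce to chains: in $\Transsys^b_l$ the set of states reachable from a state $j\le l$ is $\{0,\dots,j\}$, and the induced sub-LTS is the $b$-chain $j\to j-1\to\dots\to 0$ in which $p$ holds exactly at $0$; crucially this pointed LTS depends only on $j$, not on $l$. Since the truth of a $\PDL{\lclass}$-formula at a state depends only on the part of the LTS reachable from it, state $j$ of $\Transsys^b_l$ satisfies the same formulas for every $l\ge j$. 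Hence, letting $S_\varphi\subseteq\Nat$ be the set of $j$ such that the $b$-chain of length $j$ satisfies $\varphi$, it suffices to prove: for every $\varphi$ with $\md{\varphi}\le d'$ and every $j\ge(m+k)\cdot d'$, one has $j\in S_\varphi\iff j+k\in S_\varphi$. (The hypothesis $l\ge(m+k)d$ is only there to make the claimed range of $j$ non-empty, since $(m+k)d\ge(m+k)d'$.)

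Next I isolate two combinatorial facts about the notion ``$A\subseteq\Nat$ is \emph{$k$-periodic above $t$}'', meaning $x\in A\iff x+k\in A$ for all $x\ge t$. \emph{(i)} Writing $T_i:=\{\,l\in\Nat\mid b^l\in L_i\,\}$, each $T_i$ is $k$-periodic above $m$: the sequence of simultaneous transition profiles of $b,b^2,b^3,\dots$ becomes periodic with period $k$ already from position $m$ on---this is exactly what the proof of Lemma~\ref{lem:simpump} shows---and membership of $b^l$ in $L_i$ is read off that profile. \emph{(ii)} If $A,B\subseteq\Nat$ are non-empty and $k$-periodic above $t_A$, resp.\ $t_B$, then the sumset $A+B=\{\,a+b\mid a\in A,\ b\in B\,\}$ is $k$-periodic above $t_A+t_B+k-1$: in that range a decomposition $j=a+b$ must have $a\ge t_A$ or $b\ge t_B$ (else $j<t_A+t_B$), so that summand may be nudged up by $k$ without leaving $A$, resp.\ $B$; and a decomposition $j+k=a+b$ must have $a\ge t_A+k$ or $b\ge t_B+k$ (else $j\le t_A+t_B+k-2$), so that summand may be nudged down by $k$. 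Finally, complements and finite unions/intersections of $k$-periodic-above-$t$ sets stay $k$-periodic above the maximum of the relevant thresholds.

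The core is then a structural induction on $\varphi$ showing that $S_\varphi$ is $k$-periodic above $\theta(\md{\varphi})$, where $\theta(0):=1$ and $\theta(e+1):=\theta(e)+m+(k-1)$; since $\theta(e)=1+e(m+k-1)\le(m+k)e$ for $e\ge1$, feeding $j\ge(m+k)d'\ge\theta(d')$ into the periodicity of $S_\varphi$ finishes the proof. For $\md{\varphi}=0$ the value of $\varphi$ at a chain state depends only on its label, which is non-empty only at the state $0$, so $S_\varphi$ is $k$-periodic above $1$; the Boolean connectives use the closure properties and the monotonicity of $\theta$ (recall $\md{\neg\varphi}=1+\md{\varphi}$ in this paper). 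The modal case is where \emph{(i)} and \emph{(ii)} enter: in the chain of length $j$ the $L_i$-successors of the root are precisely the roots of the chains of length $j-s$ for $s\in T_i$ with $s\le j$, whence $S_{\mudiam{L_i}\psi}=T_i+S_\psi$ and, dually, $\Nat\setminus S_{\mubox{L_i}\psi}=T_i+(\Nat\setminus S_\psi)$; if $S_\psi$ is empty (resp.\ all of $\Nat$) these sets are $\emptyset$ (resp.\ $\Nat$) and trivially periodic, and otherwise \emph{(i)}, \emph{(ii)} and the induction hypothesis give periodicity above $m+\theta(\md{\psi})+(k-1)=\theta(1+\md{\psi})=\theta(\md{\varphi})$.

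The step I expect to be the genuine obstacle is making the threshold arithmetic land at exactly $(m+k)d'$: each modal layer costs an extra $m+(k-1)$ in the threshold $\theta$, and the telescoping sum stays below $(m+k)d'$ only because the pumping period takes effect already at position $m$ rather than at $m+k$; this is why the argument uses the sharpened reading of Lemma~\ref{lem:simpump}'s conclusion that is evident from its proof (periodicity of the profile sequence from position $m$), not merely its literal statement. A secondary, routine point is the universal modality: rather than rewriting $\mubox{L_i}\psi$ as $\neg\mudiam{L_i}\neg\psi$---which would inflate the modal depth, since negations count towards $\md{\cdot}$ here---one handles it directly via the complemented sumset identity above, so that diamonds and boxes are absorbed into the same inductive case.
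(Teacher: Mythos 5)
Your proof is correct, and at its core it runs on the same engine as the paper's: a short witnessing $b$-word is absorbed by shifting the target state by $k$ (the induction hypothesis), while a long one is pumped up or down by $k$ via Lemma~\ref{lem:simpump}. What differs is the packaging. The paper inducts directly on $d'$ and argues about a single witnessing word for $\mudiam{L_i}\psi$, leaving the $\mubox{L_i}$ and negation cases (and hence the interaction with the paper's convention $\md{\neg\varphi}=1+\md{\varphi}$) implicit; you instead reduce to pointed $b$-chains, encode each formula as a subset $S_\varphi\subseteq\Nat$, and run a structural induction in which both modalities are absorbed into one sumset lemma, with the short/long case split reappearing inside that lemma's proof. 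Your version buys explicit threshold bookkeeping via $\theta$, a uniform treatment of $\mudiam{\cdot}$ and $\mubox{\cdot}$, and a correct observation that is worth keeping: to land at the stated bound $(m+k)\cdot d'$ (and, for that matter, to pump \emph{down} from $|w|\ge m+k$ to $|w|-k$, as the paper's converse direction does), one needs the sharpened conclusion of Lemma~\ref{lem:simpump} that the profile sequence is $k$-periodic already from position $m$, which is what its proof actually establishes rather than what its statement literally says. The only caveat is cosmetic: your argument quietly uses that $\PDL{\lclass}$ truth is invariant under passing to the reachable submodel, which is standard but should be stated if this write-up replaced the paper's.
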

\begin{proof}
The proof is by induction over $d'$.  Assume that the result has been proven for all $0 < d'' \leq d'$.
Let $j \geq (m+k)\cdot d'$. 
Let $\varphi = \mudiam{L_i}\psi$ with $\md{\psi} \leq d'-1$ and $1 \leq i \leq n$. Assume that state 
$j$ in $\Transsys_{l}$ satisfies $\varphi$. Then there is $w \in \{b\}^*$ with $|w| \leq j$ 
such that $w \in L_i$ and the state $j-|w|$ satisfies $\psi$. There are two cases: If $|w| < m+k$, 
then $j-|w| \geq (m+k) \cdot (d'-1)$. If $d' = 1$, the result is immediate, since $\Label(s) = \emptyset$
unless $s=0$ in either LTS. If $d' > 1$, we can use the induction hypothesis to infer that also state 
$j-|w| + k$ in $\Transsys^b_{l + k}$ satisfies $\psi$. If $|w| \geq m+k$, then, by 
Lemma~\ref{lem:simpump}, also $b^{|w|+k} \in L_i$, whence state $j +k$ also satisfies $\mudiam{L_i}\psi$ 
in $\Transsys^b_{l+ k}$. 

Conversely, let $\varphi = \mudiam{L_i}\psi$ hold at state $j+k$ in $\Transsys^b_{l + k}$. 
Then there is $w \in \{b\}^*$ with $|w| \leq j+k$ such that $w \in L_i$ and the state $j+k-|w|$ satisfies 
$\psi$. Again, there are two cases. If $|w| < m+k$, then, $j+k-|w| \geq (m+k) \cdot (d'-1) + k$. If $d' = 1$
we again refer to the fact that $\Label(s) = \emptyset$ unless $s = 0$. If $d' > 1$  we can 
use the induction hypothesis to infer that also state $j-|w|$ satisfies $\psi$ in $\Transsys^b_{l}$.  
If $|w| \geq m+k$, then, by Lem.~\ref{lem:simpump}, also $b^{|w|-k} \in L_i$, whence state $j$ also satisfies 
$\mudiam{L_i}\psi$ in $\Transsys^b_{l}$.
\end{proof}

\begin{figure}
\begin{tikzpicture}[->, node distance = 1.7cm]
                \node[circle, draw] (1) [] {\scriptsize{$l_4$}};
                \node[] (2) [right of=1] {\dots};
                \node[circle, draw] (3) [right of=2] {\scriptsize{$0_4$}};
                \node[circle, draw] (4) [right of=3] {\scriptsize{$u$}};
                \node[circle, draw] (5) [right = 1cm of 4] {\scriptsize{$l_1$}};
                \node[] (6) [right  of=5] {\dots};
                \node[circle, draw] (7) [right of=6, label=45:{$p$}] {\scriptsize{$0_1$}};
                
                \node[circle, draw] (11) [below = 3cm of 1] {\scriptsize{$l_5$}};
                \node[] (20) [right of=11] {\dots};
                \node[circle, draw] (12) [right of=20] {\scriptsize{$0_5$}};
                \node[circle, draw] (13) [right of=12] {\scriptsize{$d$}};
                \node[circle, draw] (14) [above right = 0.5cm and 1cm of 13] {\scriptsize{$l_2$}};
                \node[] (15) [right of=14] {\dots};
                \node[circle, draw] (16) [right of=15, label=45:{$p$}] {\scriptsize{$0_2$}};
    			\node[circle, draw] (17) [below right = 0.5cm and 1cm of 13] {\scriptsize{$l'_3$}};
    			\node[] (18) [right of=17] {\dots};
                \node[circle, draw] (19) [right of=18] {\scriptsize{$l_3$}};
                \node[] (21) [right of=19] {\dots};
                \node[circle, draw] (22) [right of=21, label=45:{$p$}] {\scriptsize{$0_3$}};
                \node[] (23) [left = 0.8cm of 1] {};
                \node[] (24) [left = 0.8cm of 11] {};

                \path
                (1) edge [above, bend left=20] node {$a$} (2)
                (2) edge [above, bend left=20] node {$a$} (3)
                (3) edge [above, bend left=20] node {$a$} (4)
                (4) edge [above, bend left=20] node {$b$} (5)
                (5) edge [above, bend left=20] node {$b$} (6)
                (6) edge [above, bend left=20] node {$b$} (7)
                (1.south east)
                edge[-, decorate,decoration={brace,mirror,raise=.15cm},"$l = (m+k)\cdot d$"below=6pt]
        (1.south east -| 3.south west)
				(5.south east)
        edge[-, decorate,decoration={brace,mirror,raise=.15cm},"$l = (m+k)\cdot d$"below=6pt]
        (5.south east -| 7.south west)
        
        (11) edge [above, bend left=20] node {$a$} (20)
                (20) edge [above, bend left=20] node {$a$} (12)
                (12) edge [above, bend left=20] node {$a$} (13)
                (13) edge [above, bend left=20] node {$b$} (14)
                (13) edge [below, bend right=20] node {$b$} (17)
                (14) edge [above, bend left=20] node {$b$} (15)
                (15) edge [above, bend left=20] node {$b$} (16)
                (17) edge [above, bend left=20] node {$b$} (18)
                (18) edge [above, bend left=20] node {$b$} (19)
                (3) edge [right] node {$a$} (13)
                (19) edge [above, bend left=20] node {$b$} (21)
                (21) edge [above, bend left=20] node {$b$} (22)
                (11.south east)
                edge[-, decorate,decoration={brace,mirror,raise=.15cm},"$l = (m+k)\cdot d$"below=6pt]
        (11.south east -| 12.south west)
				(14.south east)
        edge[-, decorate,decoration={brace,mirror,raise=.15cm},"$l = (m+k)\cdot d$"below=6pt]
        (14.south east -| 16.south west)
        (17.south east)
        edge[-, decorate,decoration={brace,mirror,raise=.15cm},"$k$"below=6pt]
        (17.south east -| 19.south west)
        (19.south east)
        edge[-, decorate,decoration={brace,mirror,raise=.15cm},"$l = (m+k)\cdot d$"below=6pt]
        (19.south east -| 22.south west)
        
        (23) edge [above] node {$\Transsys^{m,k,d}_{1}$} (1)
        (24) edge [above] node {$\Transsys^{m,k,d}_2$} (11)
                ;
            \end{tikzpicture}
						\caption{Transition Systems $\Transsys^{m,k,d}_{1}$ and $\Transsys^{m,k,d}_2$.}
						\label{fig:diaboxstr}
						\end{figure}

\begin{definition}
\label{def:str-ab}
Let $m,k,d > 1$, $l = (m+k)\cdot d$ and $l' = l+k$. The LTS $\Transsys^{m,k,d}_{1}$ and $\Transsys^{m,k,d}_2$ are defined as in Fig.~\ref{fig:diaboxstr}.
\end{definition}

Let $\Progs = \{a, b\}$. Our aim is to show that the property $\mudiam{a^n}\mubox{b^n}$ cannot be expressed in $\PDLCFL$. 

\begin{lemma}
\label{lem:diabox-indist}
Let $\lclass = \{L_1,\dotsc,L_n\}$ be a collection of context-free languages and, for $1 \leq i \leq n$, 
let $L'_i = L_i \cap L(b^*)$ be the (regular) intersection of $L_i$ with $\{b\}^*$.  Let $m$ and $k$ be their 
combined pumping indices as per Lemma~\ref{lem:simpump} applied to $L'_1,\dotsc,L'_n$. Let $d \geq 1$,  
$l = (m+k)\cdot d$ and $l' = l+k$. 

Suppose $\Transsys_1$ and $\Transsys_2$ are the LTS $\Transsys^{m,k,d}_1$, respectively $\Transsys^{m,k,d}_2$ 
as per Def.~\ref{def:str-ab}. Then no $\PDL{\lclass}$ formula of modal depth $d$ or less 
distinguishes $\Transsys_1$ and $\Transsys_2$.
\end{lemma}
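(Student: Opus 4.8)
The plan is to prove the stronger statement that the two initial states, the two ``branch'' states $v_1$ of $\Transsys_1$ and $v_2$ of $\Transsys_2$ at which the $a$-prefixes end, each pair of states at equal $a$-distance from the initial states, and suitably matched pairs of states on the $b$-chains, all satisfy the same $\PDL{\lclass}$ formulas of modal depth at most $d$; the lemma is then the case of the initial states. I would proceed by induction on the modal depth $d'\le d$, closely following the proof of Lemma~\ref{lem:pumping-b-reg}. Two structural facts drive the argument. First, both $\Transsys_1$ and $\Transsys_2$ consist of an $a$-labelled simple path of length $l=(m+k)\cdot d$ that carries no propositions and leads to a branch state, from which only $b$-transitions emanate: in $\Transsys_1$ these form one $b$-chain of length $l$ to the unique $p$-state, in $\Transsys_2$ two $b$-chains of lengths $l$ and $l'=l+k$ to $p$-states. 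Second, the truth of a $\PDL{\lclass}$ formula at a state depends only on the isomorphism type of the sub-LTS reachable from it, and along a path labelled only with $b$'s the witnesses of any $\mudiam{L_i}\chi$ or $\mubox{L_i}\chi$ are $\{b\}$-words, so only $L_i\cap\{b\}^*$ is relevant there; hence on the $b$-chains every $\PDL{\lclass}$ formula behaves like the corresponding formula over the regular languages $L'_i$, and Lemma~\ref{lem:pumping-b-reg} applies to these sub-LTS directly.

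The core is the modal step at the branch states and on the $b$-chains. Consider $\varphi=\mudiam{L_i}\chi$ with $\md{\chi}\le d'-1$ at the branch states; a witness is a word $b^j$ with $b^j\in L'_i$. From $v_1$ it reaches the state at $b$-distance $j$ from $v_1$; from $v_2$ it reaches one state on each of the two $b$-chains. The state on the length-$l$ chain of $\Transsys_2$ has the same reachable sub-LTS as the state reached from $v_1$, hence agrees with it on all formulas. The state on the length-$l'$ chain is matched, at modal depth $\le d'-1$, with a state reachable from $v_1$ — either because the reachable sub-LTS are isomorphic, or, failing that, via Lemma~\ref{lem:pumping-b-reg} — possibly after shifting the witness length by $k$, which stays inside $L'_i$ by Lemma~\ref{lem:simpump} (since $b^j\in L'_i\iff b^{j-k}\in L'_i$ once $j\ge m+k$). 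The choice $l=(m+k)\cdot d$ is exactly what ensures that, at every depth $d'\le d$, the states being compared are either near enough to $p$ for their reachable sub-LTS to be isomorphic or at least $(m+k)(d'-1)$ steps from $p$, so that Lemmas~\ref{lem:pumping-b-reg} and \ref{lem:simpump} apply — the same bookkeeping as in the proof of Lemma~\ref{lem:pumping-b-reg}. The box case follows by duality (using that $\PDL{\lclass}$ is closed under negation), the Boolean cases are immediate, and so $v_1$ and $v_2$ — and all matched $b$-chain states — agree up to modal depth $d$.

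It remains to carry the equivalence back along the two isomorphic $a$-prefixes. I would add to the induction the claim that the states at equal $a$-distance from the two initial states agree up to modal depth $d'$. In the modal step at a state at $a$-distance $i$ the witnesses are all-$a$ words — reaching another $a$-prefix state, or a branch state, where the induction hypothesis applies — or words $a^{l-i}b^j$ leading straight into the $b$-chains; since such a word reaches exactly the $b$-chain state reached from the branch state by $b^j$, the argument of the previous paragraph applies, provided the pumping constants govern the residuals $\deriv{a^{l-i}}{L_i}\cap\{b\}^*$. Each of these is regular over $\{b\}$ by Parikh's theorem, and since the Parikh image of $L_i\cap a^*b^*$ is semilinear one can choose $m$ and $k$ so that $a^tb^r\in L_i\iff a^tb^{r+k}\in L_i$ for all $t\ge 0$ and all $r\ge m$ — in particular for $L'_i$ and for every such residual — after which $l=(m+k)\cdot d$ is fixed. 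Instantiating with $a$-distance $0$ and $d'=d$ gives the lemma. (If, as Fig.~\ref{fig:diaboxstr} suggests through the shared $a$-prefix node and the incoming edge to $v_2$, the two LTS overlap and $\Transsys_1$ also reaches the branch structure of $\Transsys_2$, this merely adds at certain states of $\Transsys_1$ transitions into a sub-LTS that already occurs verbatim in $\Transsys_2$, so the correspondence carries over unchanged.)

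The step I expect to be the main obstacle is the modal step for the \emph{universal} modality at $v_2$: there $\mubox{L_i}\chi$ ranges over the strictly larger family of $b$-paths offered by the extra, $k$-longer chain, so one must verify that every ``new'' endpoint is matched — one modal depth lower — by an endpoint reachable from $v_1$, and, crucially, that the shift of the path length by $k$ needed for the match never leaves $L'_i$ (or the relevant residual). Getting this to hold simultaneously at all depths $d'\le d$ is precisely what forces the chain lengths to be the multiple $(m+k)\cdot d$ of the combined pumping period, and the routine but essential appeal to Parikh's theorem for the residual languages is what lets a pumping argument tailored to regular languages be run against an a priori merely context-free $\lclass$.
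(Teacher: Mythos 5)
Your overall architecture (induction on modal depth, reduction of the $b$-chain comparisons to the unary languages $L'_i$ via Lemmas~\ref{lem:simpump} and~\ref{lem:pumping-b-reg}, boxes handled by closure under negation) matches the paper's proof for the branch states and the $b$-chains. The genuine gap is in your treatment of mixed witnesses $a^tb^r$ at the states on the $a$-prefix. You propose to pump the $b$-part of such witnesses and justify this by choosing, via Parikh/semilinearity, constants $m,k$ with $a^tb^r\in L_i\iff a^tb^{r+k}\in L_i$ for \emph{all} $t\ge 0$ and all $r\ge m$. No such constants exist in general: for $L_i=\{a^nb^n\mid n\ge 1\}$ --- the paradigmatic language of this very paper --- and any $m,k$, taking $t=r\ge m$ gives $a^tb^r\in L_i$ but $a^tb^{r+k}\notin L_i$. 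The weaker, per-residual version (apply Lemma~\ref{lem:simpump} to each $\deriv{a^t}{L_i}\cap\{b\}^*$ for the finitely many $t\le l+1$ that actually occur in the structures) does not repair this, because it is circular: $l=(m+k)\cdot d$ is determined by the pumping constants, while the residual of $\{a^nb^n\}$ by $a^t$ intersected with $\{b\}^*$ is $\{b^t\}$, whose pumping threshold must exceed $t$ and hence would have to exceed $l$. So the residual-pumping route cannot be made to work.

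The paper closes exactly this case by a structural feature of Def.~\ref{def:str-ab} that you mention only parenthetically and dismiss as harmless: the state $0_4$ of $\Transsys_1$ has an $a$-edge into the branch state $d$ of $\Transsys_2$ in addition to its edge into $u$. Consequently, every word $w\in a^+b^+$ that witnesses $\mudiam{L_i}\psi$ at a prefix state $j_4$ by landing on the chain below $u$ also labels, from $j_5$, a path to the isomorphic state on the equally long chain below $d$; and every witness landing on a chain below $d$ is available from $j_4$ and from $j_5$ via \emph{the same} word. Hence no pumping of mixed words is needed at all, and the only place where Lemmas~\ref{lem:simpump} and~\ref{lem:pumping-b-reg} are invoked is for purely-$b$ words when comparing $u$ with $d$. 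This shared edge is the crux of the construction, not a technicality; without it (or with your Parikh claim in its place) the argument for the $a$-prefix states does not go through.
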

\begin{proof}
Clearly, for $0 \leq j \leq l$, the states $j_1, j_2$ and $j_3$ on the right halves of $\Transsys_1$ and $\Transsys_2$
 satisfy the same formula of any modal depth due 
to the natural isomorphism of the associated structures. Moreover, by Lemma~\ref{lem:pumping-b-reg}, the states 
$j_1, j_2$ and $(j+k)_3$ satisfy the same $\PDL{L'_1,\dotsc,L'_n}$-formulas of modal depth $d' > 0$ if $j \geq (m+k) \cdot d'$,
and, since the associated sub-LTS contain only $b$-transitions, also the same $\PDL{\lclass}$ formulas
of modal depth at most $d'$. By another application of Lemma~\ref{lem:pumping-b-reg}, we obtain that the states $u$ and $d$ 
also satisfy the same $\PDL{\lclass}$ formulas of modal depth at most $d$ since they satisfy the same formulas of the form
$\mudiam{L_i} \psi$.

Towards the claim of the lemma, it remains to show that, for all $0 \leq j \leq l$, the states $l_4$ and $l_5$ satisfy the 
same formulas of the form $\mudiam{L_i} \psi$, where $1 \leq i \leq n$ and $\md{\psi} \leq d$. Let $0 \leq j \leq l$ and 
assume that this has been shown for all $j' < j$. Let $1 \leq i \leq n$ and $\md{\psi} \leq n$.
Consider $\mudiam{L_i}\psi$. In order for it to hold at state $j_4$, respectively $j_5$, there must be $w \in L_i$ and
some other state reachable via $w$ such that $\psi$ holds at that state. In case that $w \in \{a\}^*$, this state is either $u$,
respectively $d$, for which the result follows immediately, or the state is $j'_4$, respectively $j'_5$ with $j' \geq j$. 
In this case, the result follows from the induction hypothesis.
Hence, the remaining case is that where $w \in \{a^+ b^+\}$ and the witness for $\psi$ is one of the $j'_1, j'_2, j'_3$. 
For the latter two cases, note that any $j'_2$ or $j'_3$ is reachable from $j_4$ and $j_5$ via the exact same word, whence
the claim immediately follows. 

Hence, the interesting case is that where $\mudiam{L_i}\psi$ holds at $j_4$ due to $\psi$ 
holding at $j'_1$ with $w \in L_i$ labelling the path from $j_4$ to $j'_1$. However, note that $w$ also labels the path 
from $j_5$ to $j'_2$, at which $\psi$ also holds due to the sub-structures reachable from $j'_1$ and $j'_2$ being isomorphic. 
Hence, $j_4$ and $j_5$ satisfy exactly the same $\PDL{\lclass}$ formulas of the form $\mudiam{L_i}\psi$ modal depth at most $d$,
and, hence exactly the same $\PDL{\lclass}$ formulas of modal depth $d$. In particular, this holds for $l_4$ and $l_5$, which
is the claim of the lemma.
\end{proof}

\begin{theorem}
\label{thm:diabox}
Property $\mudiam{a^n}\mubox{b^n}$ cannot be expressed in $\PDLCFL$. 
Hence, $\vpFLC \not\le \PDLCFL$.
\end{theorem}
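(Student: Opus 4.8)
The plan is to argue by contradiction, using the indistinguishability established in Lemma~\ref{lem:diabox-indist}. Suppose some \PDLCFL formula $\varphi$ over $\Progs = \{a,b\}$ and $\Prop = \{p\}$ expressed the property $\mudiam{a^n}\mubox{b^n}$. Only finitely many context-free languages occur in $\varphi$; let $\lclass$ be the set of them and put $d = \max(\md{\varphi},1)$. Applying Lemma~\ref{lem:diabox-indist} to this $\lclass$ and this $d$ produces the two LTS $\Transsys^{m,k,d}_1$ and $\Transsys^{m,k,d}_2$ of Def.~\ref{def:str-ab} -- where $m$ and $k$ are the combined pumping indices of the regular languages $L \cap L(b^*)$ for $L \in \lclass$ given by Lemma~\ref{lem:simpump}, and $l = (m+k)\cdot d$ -- together with the guarantee that no $\PDL{\lclass}$ formula of modal depth at most $d$ distinguishes them. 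Since $\varphi$ is a $\PDL{\lclass}$ formula of modal depth $\md{\varphi} \leq d$, we get in particular $\Transsys^{m,k,d}_1 \models \varphi$ iff $\Transsys^{m,k,d}_2 \models \varphi$.

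The second, and main, step is to verify that the property $\mudiam{a^n}\mubox{b^n}$ itself \emph{does} separate $\Transsys^{m,k,d}_1$ from $\Transsys^{m,k,d}_2$, which is a direct inspection of Fig.~\ref{fig:diaboxstr}. In $\Transsys^{m,k,d}_1$ the initial state reaches, along its unique $a$-path, the hub state $u$, below which sits a single $b$-chain whose only $p$-labelled state $0_1$ lies at exactly the distance matching the length of that $a$-path; taking the exponent equal to this common length makes the formula $\mudiam{a^n}\mubox{b^n}p$ hold at the initial state, so $\Transsys^{m,k,d}_1$ has the property. In $\Transsys^{m,k,d}_2$ the corresponding hub $d$ has instead two outgoing $b$-branches whose $p$-labelled endpoints $0_2$ and $0_3$ sit at $b$-distances differing by exactly $k > 0$; hence for no exponent can all $b$-paths of that length from $d$ reach a $p$-state, and the remaining states reachable by an $a$-path of some length from the initial state -- those lying inside the $a$-chain -- admit no witness either. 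So $\Transsys^{m,k,d}_2$ does not have the property.

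Combining the two steps, the assumed $\varphi$ would satisfy $\Transsys^{m,k,d}_1 \models \varphi$ but $\Transsys^{m,k,d}_2 \not\models \varphi$, contradicting the indistinguishability from Lemma~\ref{lem:diabox-indist}. Hence no such $\varphi$ exists, i.e.\ $\mudiam{a^n}\mubox{b^n}$ is not expressible in \PDLCFL. The remaining claim follows at once: by the discussion in Sect.~\ref{sec:sepprop} the property $\mudiam{a^n}\mubox{b^n}$ \emph{is} expressible in \vpFLC, e.g.\ by $(\mu Z.\mudiam{a};\mubox{b} \vee \mudiam{a};Z;\mubox{b});p$, so \vpFLC expresses a property \PDLCFL cannot, and therefore $\vpFLC \not\le \PDLCFL$.

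The step I expect to be the real obstacle is the structural verification in the middle paragraph: one must rule out \emph{every} choice of exponent for $\Transsys^{m,k,d}_2$, including the degenerate ones where it is so large that no $b$-path -- or no $a$-path -- of that length exists, so that the corresponding modal operator applies vacuously. It is precisely the offset $k$ between the two $b$-chains below $d$, together with the carefully chosen length $l$ of the $a$-chain leading into the hub, that closes these gaps; this is also why the construction needs two $b$-branches rather than one, and why the pumping indices supplied by Lemma~\ref{lem:simpump} (which determine $l$) enter the picture.
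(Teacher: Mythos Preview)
Your argument is exactly the paper's: assume a \PDLCFL formula $\varphi$ expresses the property, collect its finitely many languages $\lclass$ and its modal depth $d$, invoke Lemma~\ref{lem:diabox-indist} to conclude that $\varphi$ cannot distinguish $\Transsys^{m,k,d}_1$ from $\Transsys^{m,k,d}_2$, and derive a contradiction from the fact that the property itself does distinguish them; the paper compresses your entire middle paragraph into a single ``clearly'' and likewise appeals to Sect.~\ref{sec:sepprop} for the \vpFLC direction.

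Two remarks on the details you add beyond the paper. First, a small misreading of Fig.~\ref{fig:diaboxstr}: in $\Transsys^{m,k,d}_1$ the $a$-path from the root is \emph{not} unique, since $0_4$ has two $a$-successors $u$ and $d$ (the figure contains an $a$-edge from $0_4$ down to $d$, so $\Transsys^{m,k,d}_1$ includes the whole lower subtree as well). This is harmless for your direction, because choosing the branch to $u$ with exponent $n=l{+}1$ already gives the witness you describe. Second, the vacuous-satisfaction worry you flag in your final paragraph is genuine, and neither the offset $k$ nor the choice of $l$ closes it for \emph{small} exponents as you claim: for any $1\le n\le l$ the $a$-path of length $n$ from $l_5$ ends at a state of the $a$-chain with no $b$-successor whatsoever, so $\mubox{b^n}p$ holds there vacuously and hence $\mudiam{a^n}\mubox{b^n}p$ holds at $l_5$. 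The paper's ``clearly'' passes over exactly the same point, so at this step your proof is precisely as rigorous as the paper's; a watertight version would add, e.g., a $b$-edge from every $a$-chain state to a fresh $\overline{p}$-sink, which leaves the argument of Lemma~\ref{lem:diabox-indist} intact.
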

\begin{proof}
Assume to the contrary that there is $\varphi \in \PDLCFL$ that expresses $\mudiam{a^n}\mubox{b^n}$, i.e.\ $\varphi$ holds 
in exactly those LTS that satisfy it. Let $d$ be the modal depth of $\varphi$ and let $\lclass = L_1,\dotsc,L_n$
be a list of the languages used in $\varphi$. By Lemma~\ref{lem:diabox-indist}, $\varphi$ cannot distinguish
the transition systems $\Transsys_1$ and $\Transsys_2$ in Fig.~\ref{fig:diaboxstr}. However, clearly
$\Transsys_1$ satisfies $\mudiam{a^n}\mubox{b^n}$, while $\Transsys_2$ does not. Hence, $\mudiam{a^n}\mubox{b^n}$ cannot be expressed
in $\PDLCFL$.

Conversely, $\mudiam{a^n}\mubox{b^n}$ can be expressed in $\vpFLC$ as seen in Sec.~\ref{sec:sepprop}.
\end{proof}
Note that, in fact, the result can be strengthened to the Boolean closure of context-free languages, since the intersection
of a language in the boolean closure of $\CFL$ with a unary alphabet remains regular. Moreover, the theorem also separates
$\PDLVPL$ from \vpFLC since the former is a fragment of \PDLCFL.

The previous theorem already supports the intuition that the reason for the 
ineffability of $\mudiam{a^n}\mubox{b^n}$ is not to be found in language-theoretic reasons, but
in the alternation of the modal operators, which in some sense ``insulates'' the front
part and the back part of the property, i.e.\ $\mudiam{a^n}$, respectively $\mubox{b^n}$
from each other, preventing the constraint on the joint number of letters $n$ to 
be ``remembered'' in the rest of the formula. In order to underline this point we now sketch 
that the property $\mudiam{a^n}\mubox{b}\mudiam{a^n}$ 
can also not be expressed in $\PDLCFL$. 

\begin{theorem}
\label{thm:an-b-an}
Property $\mudiam{a^n}\mubox{b}\mudiam{a^n}$ cannot be expressed in $\PDLCFL$. Hence, $\FLC \not \le \PDLCFL$.
\end{theorem}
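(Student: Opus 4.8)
The plan is to mirror the structure of the proof of Theorem~\ref{thm:diabox} but adapt the two transition systems so that the distinguishing feature is the property $\mudiam{a^n}\mubox{b}\mudiam{a^n}$ rather than $\mudiam{a^n}\mubox{b^n}$. Concretely, I would build two LTS $\Transsys_1, \Transsys_2$ over $\Progs = \{a,b\}$, each consisting of a long initial $a$-chain of length $l = (m+k)\cdot d$ (where $m, k$ are the combined pumping indices from Lemma~\ref{lem:simpump} applied to the unary intersections $L_i \cap L(a^*)$ of the context-free languages $L_1,\dots,L_n$ appearing in a hypothetical formula $\varphi$ of modal depth $d$), leading into a $b$-edge that branches, after which there is again an $a$-chain back to a $p$-state. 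In $\Transsys_1$ the two $a$-chains have the same length $l$, so $\mudiam{a^l}\mubox{b}\mudiam{a^l}p$ holds at the root; in $\Transsys_2$ one of the post-$b$ $a$-chains is longer by $k$ (length $l+k$), so that no matter which length $n$ is guessed for the first $\mudiam{a^n}$ segment, one of the $b$-successors lacks an $a$-path of exactly length $n$ to a $p$-state, hence $\mudiam{a^n}\mubox{b}\mudiam{a^n}$ fails. This is the $a$-analogue of Fig.~\ref{fig:diaboxstr}.

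The key technical step is an $a$-version of Lemma~\ref{lem:pumping-b-reg}: for unary $a$-chains, states at position $j$ in a chain of length $l$ and position $j+k$ in a chain of length $l+k$ satisfy the same $\PDL{\lclass}$ formulas of modal depth at most $d'$ provided $j \geq (m+k)\cdot d'$. The proof is identical to that of Lemma~\ref{lem:pumping-b-reg} with the roles of $a$ and $b$ swapped and using $L_i \cap L(a^*)$; since context-free languages over the unary alphabet $\{a\}$ are regular (cited in the preliminaries via Parikh), Lemma~\ref{lem:simpump} applies to these unary intersections. With this in hand, the indistinguishability argument proceeds exactly as in Lemma~\ref{lem:diabox-indist}: the two post-$b$ sub-LTS reachable from corresponding branch points are handled by the pumping lemma (one is the "short" chain, one the "long" chain, and they agree on all $\PDL{\lclass}$ formulas of depth $\leq d$ at the relevant positions), and the initial $a$-chain is handled by an inner induction on the position $j$, observing that for any context-free $L_i$ and any $w \in L_i$ labelling a path out of position $j$ in the initial chain, the same $w$ labels a path out of position $j$ in the other system to a position with isomorphic (or, via the pumping lemma, $\PDL{\lclass}$-equivalent up to depth $d$) future. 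One subtlety to address: a word $w \in L_i$ may cross the branching $b$-edge, entering either of the two post-$b$ chains; both cases must be checked, and in $\Transsys_2$ the "long" chain is the one that the pumping lemma aligns with the single chain of $\Transsys_1$.

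The main obstacle, as in the $\mudiam{a^n}\mubox{b^n}$ case, is making sure the combinatorics of which words $w$ in a context-free language can route through the structure are fully covered — in particular that a word of the form $a^i b a^j$ reaching a $p$-state in one structure can always be matched in the other, and that the pumping constants $m,k$ chosen from the unary $a$-intersections genuinely control all relevant segment lengths simultaneously across all $L_i$ and all modal-depth levels $d' \leq d$. Once the indistinguishability lemma is established, the theorem follows immediately: a hypothetical $\varphi \in \PDLCFL$ of modal depth $d$ expressing $\mudiam{a^n}\mubox{b}\mudiam{a^n}$ cannot exist, since it would have to separate $\Transsys_1$ (which satisfies the property) from $\Transsys_2$ (which does not), yet by the lemma it cannot; and since $\mudiam{a^n}\mubox{b}\mudiam{a^n}$ is expressible in \FLC as shown in Sect.~\ref{sec:sepprop}, we conclude $\FLC \not\le \PDLCFL$. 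As in the remark after Theorem~\ref{thm:diabox}, the result extends to the Boolean closure of \CFL, since unary intersections stay regular.
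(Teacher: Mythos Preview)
Your overall plan matches the paper's: build $a$-labelled analogues of the structures in Fig.~\ref{fig:diaboxstr}, use an $a$-version of Lemma~\ref{lem:pumping-b-reg} on the post-$b$ $a$-chains, and then run the induction over the initial $a$-chain as in Lemma~\ref{lem:diabox-indist}. However, there is a genuine gap in your choice of pumping constants.

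You take $m,k$ from Lemma~\ref{lem:simpump} applied only to the languages $L_i \cap \{a\}^*$. This suffices for the post-$b$ chains in isolation, but it does \emph{not} suffice for the step where you compare the two branch points (the analogues of $u$ and $d$), nor for words out of the initial chain that cross the $b$-edge. The relevant words there have the form $b a^{j}$ (resp.\ $a^{i} b a^{j}$), and membership of such a word in $L_i$ is not governed by $L_i \cap \{a\}^*$ at all. Concretely: suppose $\mudiam{L_i}\psi$ holds at the branch point of $\Transsys_2$ because some $b a^{j} \in L_i$ leads into the long chain to a $\psi$-state, with $m+k < j$. To produce a matching witness in $\Transsys_1$ you must either pump the word down to $b a^{j-k} \in L_i$, or argue that the endpoint of $b a^{j}$ in the short chain also satisfies $\psi$; the latter is only available when $l-j \ge (m+k)\cdot\md{\psi}$, i.e.\ when $j \le m+k$, and the former is simply not licensed by Lemma~\ref{lem:simpump} applied to $L_i \cap \{a\}^*$, since $b a^{j} \notin \{a\}^*$. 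Your sentence ``the pumping constants $m,k$ chosen from the unary $a$-intersections genuinely control all relevant segment lengths'' is precisely the step that fails.

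The paper isolates exactly this difficulty and resolves it via derivatives: at the branch point, $\mudiam{L_i}\psi$ is equivalent to $\mudiam{b}\mudiam{\deriv{b}{L_i}}\psi$ (paths out of $u$ and $d$ all start with a single $b$), Lemma~\ref{lem:derivatives} guarantees that $\deriv{b}{L_i}$ is again context-free, and hence $\deriv{b}{L_i} \cap \{a\}^*$ is regular. One then chooses $m,k$ via Lemma~\ref{lem:simpump} simultaneously over \emph{both} families $L_i \cap \{a\}^*$ and $\deriv{b}{L_i} \cap \{a\}^*$. With these enlarged pumping constants, the pumping step above becomes available and the rest of your argument goes through unchanged.
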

\begin{proof} (sketch)
Consider the structures $\Transsys^a_1$ and $\Transsys^a_2$ in Fig.~\ref{fig:diaboxstr2} for 
$m, k, d$ to be given later. Clearly
$\Transsys^a_1$ satisfies $\mudiam{a^n}\mubox{b}\mudiam{a^n}$, while $\Transsys^a_2$ does not. 

The proof that the two structures cannot be distinguished in $\PDLCFL$ proceeds via the 
same pattern as the proof of Thm.~\ref{thm:diabox}, i.e.\ the structures are built depending 
on the context-free languages $L_1,\dotsc,L_n$ used in the hypothetical formula that expressed $\mudiam{a^n}\mubox{b}\mudiam{a^n}$, 
and its modal depth $d$. 
In particular, in the proof that the states $j_1, j_2$ and $(j+k)_3$ satisfy the same formulas 
of a given modal depth over  $L_1,\dotsc,L_n$ proceeds in the same pattern by invoking Lemma~\ref{lem:pumping-b-reg}
over the (regular) intersections of $L_1,\dotsc,L_n$ with $\{a\}^*$. Also, after establishing 
that this holds, and that the states $u$ and $d$ satisfy the same $\PDL{L_1,\dotsc,L_n}$ formulas
 of modal depth $d$, the proof for the left part of the structures 
is the same. 

However, it is not as straightforward to establish that $u$ and $d$ satisfy the same 
$\PDL{L_1,\dotsc,L_n}$-formulas of modal depth $d$, since the paths leading out of $u$ and $d$ are not
over a unary alphabet. On the other hand, all these paths are labelled by a word starting with exactly
one $b$, followed by a number of $a$'s. Hence, we can equivalently replace any $\mudiam{L_i}\psi$ by 
$\mudiam{b}\mudiam{\deriv{b}{L_i}}\psi$ where $\deriv{b}{L_i}$ is the $b$-derivative of $L_i$. Note that, 
by Lemma~\ref{lem:derivatives}, $\deriv{b}{L_i}$ is context-free again and, hence, can now be replaced
by its (regular) intersection with $\{a\}^*$, since it is interpreted over paths that contain only $a$'s. Of course, 
this has to be taken into account when defining $m$ and $k$ via Lemma~\ref{lem:simpump}. It is not hard to see 
that the proof succeeds by defining $m$ and $k$ over the set of languages $L'_1,\dotsc,L'_n, {L'}^b_1,\dotsc,{L'}^b_n$
where $L'_i$ is the intersection of $L_i$ with $\{a\}^*$, and ${L'}^b_i$ is the intersection of $\deriv{b}{L_i}$ with $\{a\}^*$. 

On the other hand, we have seen in Sec.~\ref{sec:sepprop} that $\mudiam{a^n}\mubox{b}\mudiam{a^n}$ can be expressed in \FLC
by the formula $(\mu Z.\mudiam{a};\mubox{b};\mudiam{a} \vee \mudiam{a};Z;\mudiam{a});p$.
\end{proof}

\begin{figure}
\begin{tikzpicture}[->, node distance = 1.7cm]
                \node[circle, draw] (1) [] {\scriptsize{$l_4$}};
                \node[] (2) [right of=1] {\dots};
                \node[circle, draw] (3) [right of=2] {\scriptsize{$0_4$}};
                \node[circle, draw] (4) [right of=3] {\scriptsize{$u$}};
                \node[circle, draw] (5) [right = 1cm of 4] {\scriptsize{$l_1$}};
                \node[] (6) [right  of=5] {\dots};
                \node[circle, draw] (7) [right of=6, label=45:{$p$}] {\scriptsize{$0_1$}};
                
                \node[circle, draw] (11) [below = 3cm of 1] {\scriptsize{$l_5$}};
                \node[] (20) [right of=11] {\dots};
                \node[circle, draw] (12) [right of=20] {\scriptsize{$0_5$}};
                \node[circle, draw] (13) [right of=12] {\scriptsize{$d$}};
                \node[circle, draw] (14) [above right = 0.5cm and 1cm of 13] {\scriptsize{$l_2$}};
                \node[] (15) [right of=14] {\dots};
                \node[circle, draw] (16) [right of=15, label=45:{$p$}] {\scriptsize{$0_2$}};
    			\node[circle, draw] (17) [below right = 0.5cm and 1cm of 13] {\scriptsize{$l'_3$}};
    			\node[] (18) [right of=17] {\dots};
                \node[circle, draw] (19) [right of=18] {\scriptsize{$l_3$}};
                \node[] (21) [right of=19] {\dots};
                \node[circle, draw] (22) [right of=21, label=45:{$p$}] {\scriptsize{$0_3$}};
                \node[] (23) [left = 0.8cm of 1] {};
                \node[] (24) [left = 0.8cm of 11] {};

                \path
                (1) edge [above, bend left=20] node {$a$} (2)
                (2) edge [above, bend left=20] node {$a$} (3)
                (3) edge [above, bend left=20] node {$a$} (4)
                (4) edge [above, bend left=20] node {$b$} (5)
                (5) edge [above, bend left=20] node {$a$} (6)
                (6) edge [above, bend left=20] node {$a$} (7)

                (1.south east)
                edge[-, decorate,decoration={brace,mirror,raise=.15cm},"$(m+k)\cdot d-1$"below=6pt]
        (1.south east -| 3.south west)
				(5.south east)
        edge[-, decorate,decoration={brace,mirror,raise=.15cm},"$(m+k)\cdot d$"below=6pt]
        (5.south east -| 7.south west)
        
        (11) edge [above, bend left=20] node {$a$} (20)
                (20) edge [above, bend left=20] node {$a$} (12)
                (12) edge [above, bend left=20] node {$a$} (13)
                (13) edge [above, bend left=20] node {$b$} (14)
                (13) edge [below, bend right=20] node {$b$} (17)
                (14) edge [above, bend left=20] node {$a$} (15)
                (15) edge [above, bend left=20] node {$a$} (16)
                (17) edge [above, bend left=20] node {$a$} (18)
                (18) edge [above, bend left=20] node {$a$} (19)
                (3) edge [right] node {$a$} (13)
                (19) edge [above, bend left=20] node {$a$} (21)
                (21) edge [above, bend left=20] node {$a$} (22)
                (11.south east)
                edge[-, decorate,decoration={brace,mirror,raise=.15cm},"$(m+k)\cdot d-1 $"below=6pt]
        (11.south east -| 12.south west)
				(14.south east)
        edge[-, decorate,decoration={brace,mirror,raise=.15cm},"$(m+k)\cdot d$"below=6pt]
        (14.south east -| 16.south west)
        (17.south east)
        edge[-, decorate,decoration={brace,mirror,raise=.15cm},"$k$"below=6pt]
        (17.south east -| 19.south west)
        (19.south east)
        edge[-, decorate,decoration={brace,mirror,raise=.15cm},"$(m+k)\cdot d$"below=6pt]
        (19.south east -| 22.south west)
        
        (23) edge [above] node {$\Transsys_1$} (1)
        (24) edge [above] node {$\Transsys_2$} (11)
                ;
            \end{tikzpicture}
						\caption{Transition Systems $\Transsys^a_1$ and $\Transsys^a_2$.}
						\label{fig:diaboxstr2}
						\end{figure}

\section{Separating Decidable and Undecidable Logics}
\label{sec:sepdecfromundec}

We now show that the inclusions of decidable program logics in Fig.~\ref{fig:hierarchy} in the undecidable ones are strict.
More precisely, we show that \PDLVPL is strictly less expressive than \PDLCFL, and that \vpFLC is strictly less expressive 
than \FLC, which separates the top band in Fig.~\ref{fig:hierarchy} from the middle band. 

It is tempting to conclude an expressivity gap from the decidability gap; however, only a
weaker proposition follows: there can be no \emph{computable} equivalence-preserving translation from \FLC to \vpFLC, resp.\ from
\PDLCFL to \PDLVPL. This does not preclude the existence of equivalent formulas in the smaller logic for each of the larger, though. It
merely says that they could not be constructed effectively if they exist.

However, with a little bit more observation it is possible to extend this to an expressiveness gap, too. 

\begin{lemma}
\label{lem:anbaninvpflc}
Let $p$ be an atomic proposition. Suppose $\mudiam{a^nba^n}p$ was expressible in \vpFLC. Then its satisfiability problem would
be undecidable. 
\end{lemma}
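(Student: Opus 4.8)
The plan is to use the hypothesised \vpFLC formula as a gadget inside an encoding of an undecidable problem, thereby making \vpFLC-satisfiability undecidable and contradicting \cite{conf/concur/BruseL21} (this conditional is exactly what the lemma needs: together with that decidability result it yields $\mudiam{a^nba^n}p \notin \vpFLC$). The undecidable problem I would encode is non-halting of two-counter (Minsky) machines; the point of departure is the undecidability proof for \PDLCFL in \cite{JCSS::HarelPS1983}, which already arithmetises such machines and whose \emph{only} genuinely non-regular ingredient can be arranged to be a program of the form $a^nba^n$: the language $L_{a^nba^n}$ is precisely what is needed to test that two blocks of $a$'s generated independently along a run have equal length, i.e.\ to compare two counter values.

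Concretely, suppose $\varphi_0 \in \vpFLC$, over a visibly pushdown alphabet with underlying set $\{a,b\}$, expresses $\mudiam{a^nba^n}p$; by the evident genericity of such a construction in the atom there is an analogous formula $\varphi_q$ for every proposition $q$. I would fix this alphabet and enlarge it by finitely many fresh \emph{internal} letters (adding internal letters preserves the visibly pushdown property), used for the ``control flow'' of the machine. Over the enlarged alphabet I would write down, as a \vpFLC formula $\Psi_M$, the standard encoding of a machine $M$: a run is a path, configurations are marked by propositions, and the non-arithmetical parts of the ``step''/``initial''/``final'' conditions (choice of instruction, program counter, which counter is touched) are regular and hence expressible in \vpFLC by fixpoints over the internal control modalities, exactly as in the embedding of \mucalc; each arithmetical assertion ``the current counter value equals the counter value recorded $k$ configurations earlier'' is replaced by an instance of $\varphi_q$ applied to a fresh marker proposition $q$ (if necessary together with a regular naming constraint $\mathit{Alw}(q \leftrightarrow \chi)$, where $\mathit{Alw}\,\xi := \nu X.(\xi \wedge \bigwedge_{c} \mubox{c}X)$ ranges over all letters $c$). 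Then $\Psi_M$ is satisfiable iff $M$ does not halt, and the map $M \mapsto \Psi_M$ is computable.

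Hence a decision procedure for \vpFLC-satisfiability would decide the non-halting problem for Minsky machines, which is impossible; this establishes the lemma.

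I expect the bulk of the work, and the main obstacle, to be the first point: verifying that the \cite{JCSS::HarelPS1983}-style encoding can be set up so that its \emph{sole} non-regular ingredient is $L_{a^nba^n}$ (equivalently, that comparing two counter values is the only thing that forces one out of the regular/visibly-pushdown world), and then recasting it in the syntax of \vpFLC rather than of \PDLCFL. The remaining points --- that the control-flow part of a run is a regular, hence \vpFLC-expressible, property, that naming by fresh propositions is sound over reachable (tree) models, and that $\varphi_q$ may be reused for each atom $q$ --- are routine.
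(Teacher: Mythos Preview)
Your approach is essentially the paper's: both plug the hypothesised \vpFLC formula for $\mudiam{a^nba^n}p$ into the Harel--Pnueli--Stavi undecidability construction for \PDL{} extended by the single non-regular program $a^nba^n$, relying on the fact that the remaining (regular) part of that construction already lives in \vpFLC. The paper's execution is more economical in two respects. First, it cites \cite{JCSS::HarelPS1983} as reducing PCP (rather than Minsky-machine non-halting) directly to satisfiability of $\PDL{a^nba^n}$; since that reduction is \emph{already} phrased over ordinary \PDL{} plus the single extra modality $\mudiam{a^nba^n}$, the step you flag as the ``bulk of the work'' --- arranging that $L_{a^nba^n}$ is the sole non-regular ingredient --- has been done for you and need not be re-established. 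Second, the paper does not rebuild the encoding inside \vpFLC: it simply translates the \PDL{} skeleton of $\varphi_{\mathcal{I}}$ into \vpFLC and replaces each subformula $\mudiam{a^nba^n}\psi$ by the syntactic substitution $\varphi_{a^nba^n}[\psi/p]$, so your fresh-proposition naming device (sound, and arguably cleaner semantically) is not needed.
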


\begin{proof}
Harel et al.\ present a reduction from Post's Correspondence Problem (PCP) to the satisfiability problem for $\PDL{a^nba^n}$
\cite{JCSS::HarelPS1983}. They
show how to construct, for every input $\mathcal{I} = \{(u_1,v_1),\ldots,(u_n,v_n)\}$ to PCP, a formula $\varphi_{\mathcal{I}}$ of
$\PDL{a^nba^n}$ which is satisfiable iff $\mathcal{I}$ has a solution (in the sense of PCP). 

Now suppose there was a \vpFLC-formula $\varphi_{a^nba^n}$ equivalent to $\mudiam{a^nba^n}p$. Since any ordinay \PDL-formula can
easily be expressed in \vpFLC, we immediately get a similar reduction from PCP to \vpFLC's satisfiability problem: for any instance 
$\mathcal{I}$ of $\varphi$ we can construct a \vpFLC-formula $\varphi'_{\mathcal{I}}$ in the same way as the 
$\PDL{a^nba^n}$-formula $\varphi_{\mathcal{I}}$ with the only difference that we use $\varphi_{a^nba^n}[\psi/p]$ whenever 
$\varphi_{\mathcal{I}}$ uses $\mudiam{a^nba^n}\psi$. Clearly, $\varphi'_{\mathcal{I}}$ is equivalent to $\varphi_{\mathcal{I}}$ 
for any $\mathcal{I}$, and so it is satisfiable iff $\mathcal{I}$ is solvable. 
\end{proof}

Note that this argument relies only on the existence of a formula equivalent to $\mudiam{a^nba^n}p$, not its effective constructibility.
In fact, the question after the effective constructibility of $\varphi_{a^nba^n}$ is meaningless as it is a \emph{fixed} formula
and is therefore trivially constructible whenever it exists, as for every fixed formula there is clearly an algorithm which can
write it down. Instead, it is the modularity of \vpFLC, i.e.\ the possibility to build formulas by replacing subformulas, which is 
used in order to handle arbitrary PCP inputs $\mathcal{I}$.

\begin{theorem}
\label{thm:sepundec}
We have \FLC $\not\le$ \vpFLC and \PDLCFL $\not\le$ \PDLVPL.
\end{theorem}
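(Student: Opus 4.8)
The plan is to derive both non-inclusions from the undecidability obstruction established in Lemma~\ref{lem:anbaninvpflc}, combined with the known decidability results for the two smaller logics. The key point is that \vpFLC and \PDLVPL both have decidable satisfiability problems (the former by \cite{conf/concur/BruseL21}, the latter being \TwoEXPTIME-complete by \cite{journals/jlp/LodingLS07}), while the property $\mudiam{a^nba^n}p = \mudiam{L_{a^nba^n}}p$ is a genuine \PDLCFL (and hence \FLC) formula as noted in Sec.~\ref{sec:sepprop}.

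First I would argue the \PDLCFL\ versus \PDLVPL\ case directly: the formula $\mudiam{a^nba^n}p$ is in \PDLCFL. Suppose it were equivalent to some \PDLVPL\ formula $\psi$. Then, exactly as in the proof of Lemma~\ref{lem:anbaninvpflc} but at the propositional-dynamic level, Harel et al.'s reduction \cite{JCSS::HarelPS1983} from PCP to satisfiability of $\PDL{a^nba^n}$ would yield a reduction from PCP to satisfiability of \PDLVPL: for a PCP instance $\mathcal{I}$, replace every occurrence of $\mudiam{a^nba^n}\chi$ in $\varphi_{\mathcal{I}}$ by $\psi[\chi/p]$, obtaining an equivalent \PDLVPL\ formula that is satisfiable iff $\mathcal{I}$ is solvable. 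This contradicts the decidability of \PDLVPL's satisfiability problem. Since the inclusion $\PDLVPL \le \PDLCFL$ holds (every \VPL\ is a \CFL), this establishes $\PDLCFL \not\le \PDLVPL$, i.e.\ the inclusion is strict.

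Next I would handle \FLC\ versus \vpFLC: by Lemma~\ref{lem:anbaninvpflc}, if $\mudiam{a^nba^n}p$ were expressible in \vpFLC\ then \vpFLC's satisfiability problem would be undecidable, contradicting \cite{conf/concur/BruseL21}. Hence $\mudiam{a^nba^n}p$ is an \FLC\ property not expressible in \vpFLC, and since $\vpFLC \le \FLC$ by construction, we get $\FLC \not\le \vpFLC$. The main subtlety — and the reason the naive ``decidability gap implies expressivity gap'' inference does not suffice — is that one must invoke \emph{modularity}: it is not the effective constructibility of the hypothetical equivalent formula that matters (it is a single fixed formula, hence trivially constructible if it exists), but the fact that both \vpFLC\ and \PDLVPL\ are closed under substitution of formulas for propositions, so that the mere \emph{existence} of an equivalent formula propagates through Harel et al.'s uniform family $\{\varphi_{\mathcal{I}}\}_{\mathcal{I}}$ and yields a genuine algorithmic reduction. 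This substitution-closure is exactly what Lemma~\ref{lem:anbaninvpflc} exploits, and the same reasoning transfers verbatim to the \PDL-level logics.

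I expect no serious obstacle beyond stating these two arguments cleanly; the only care needed is to make explicit that (i) $\mudiam{a^nba^n}p$ is a well-formed \PDLCFL\ (resp.\ \FLC) formula, which was already observed in Sec.~\ref{sec:sepprop}, (ii) the substitution $\psi[\chi/p]$ stays within \PDLVPL\ (resp.\ $\varphi_{a^nba^n}[\chi/p]$ within \vpFLC), which follows since both logics permit arbitrary subformulas in the relevant positions, and (iii) that Harel et al.'s reduction $\mathcal{I} \mapsto \varphi_{\mathcal{I}}$ uses $\mudiam{a^nba^n}$ only in the guarded form $\mudiam{a^nba^n}\chi$ so that the replacement is well-defined — this is implicit in \cite{JCSS::HarelPS1983} and already used in the proof of Lemma~\ref{lem:anbaninvpflc}.
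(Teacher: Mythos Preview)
Your proposal is correct and essentially matches the paper's proof: both derive the two inexpressibility results from Lemma~\ref{lem:anbaninvpflc} together with the decidability of \vpFLC\ resp.\ \PDLVPL. The only organisational difference is that the paper's primary route for the \PDLVPL case goes via the embedding of \PDLVPL into \vpFLC (so that inexpressibility in \vpFLC immediately yields inexpressibility in \PDLVPL), whereas you re-run the Harel-et-al.\ substitution argument directly at the \PDLVPL level --- a route the paper itself mentions as an equally valid alternative.
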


\begin{proof}
Clearly, $\mudiam{a^nba^n}p$ can be expressed in \PDLCFL. By \cite{langesomla-ipl06}, it is expressible in \FLC. On the other hand, if 
it was expressible in \vpFLC then, by Lemma~\ref{lem:anbaninvpflc}, \vpFLC's satisfiability problem would be undecidable contradicting
its decidability result from \cite{conf/concur/BruseL21}.

Likewise $\mudiam{a^nba^n}p$ cannot be expressible in \PDLVPL either, as it would then be expressible in \vpFLC, too, by the
generic embedding of \PDLCFL into \FLC \cite{langesomla-ipl06} which produces formulas from \vpFLC when applied to formulas 
from \PDLVPL. Equally, the contradiction can be obtained using the decidability result for \PDLVPL \cite{journals/jlp/LodingLS07}.
\end{proof}

So the different status of decidability between program logics does not immediately yield a gap in expressiveness, but it can be used
to construct one by embedding presumably inexpressible properties in a set of formulas such that its subset of satisfiable ones 
is decidable in one case and undecidable in the other.

\section{Conclusion}
\label{sec:concl}

\paragraph*{Summary.}
We have completely mapped the structures in the hierarchy of expressiveness amongst program logics for non-regular properties
up to context-free ones. The two main strands of logics for such purposes found in the literature are propositional
dynamic ones which incorporate formal languages into modal operators, and modal fixpoint logics which can, to some
extent, mimic the generation of formal languages through least and greatest fixpoint constructions. We have 
provided formal proofs of what one may expect, namely that the bounded modality alternation inherent in propositional
dynamic logics cannot be overcome by subtle constructions: there are properties which require some -- even the minimal
-- amount of alternation amongst modal operators which cannot be expressed in these propositional dynamic logics.
Note that $\mudiam{a^n}\mubox{b^n}$ only features one swap from an existential to a universal modality, and
$\mudiam{a^n}\mubox{b}\mudiam{a^n}$ features the smallest possible amount of one kind of operator: only a single box-modality.

\paragraph*{Further Work.}
One can, surely, devote an arbitrary amount of time to find further properties that witness the separation of logics
presented here. For instance, with the developments leading to Thm.~\ref{thm:diabox} it should not be too difficult to
show that $\mubox{a^n}\mudiam{b^n}$ cannot be expressed in \PDLCFL either. 

Far more interesting, though, would be to investigate whether such separation techniques could be applied even further up the
hierarchy of program logics. Note that \PDL{} is a very generic formalism that is formally defined for \emph{any} language
class. Thus, any hierarchy of language classes imposes a hierarchy of \PDL{}-logics, but strictness amongst languages does
not immediately transfer to the logics. Instead, more or less sophisticated arguments are needed. As shown here, the argument
based on the Pumping Lemma can be used up to the context-free languages, in fact even their Boolean closure. Beyond, for 
instance for the class \CSL\ of context-sensitive languages, it is unclear whether there are separation results to be discovered
in a similar style. So a separation of \PDL{\CSL} from \PDLCFL for instance has, as far as we know, not been shown yet.

One may argue that beyond \PDLCFL and \FLC, the question of the strictness of the hierarchy becomes less interesting as
these logics are undecidable already. There is, however, still a vast space of program logics with potential applications in
formal verification despite undecidability of their satisfiability problems, as decidability of their model checking problems
reaches far beyond that. On the modal fixpoint logic strand, even full \HFL{} -- which lifts \mucalc not only to predicate
transformers as \FLC does, but also to higher-order predicate transformers of arbitrary arity -- retains model checking 
decidability, albeit of complexity that is $k$-fold exponential in the size of the underlying LTS \cite{als-mchfl07} when 
$k$ equals the maximal type order of such transformers. 

The complexity of model checking propositional dynamic logics is well grounded in formal language theory, as it is polynomially
linked to the complexity of the emptiness problem for intersections with regular languages \cite{la-reachpdl:2011}, yielding,
for instance exponential-time model checking for \PDL{} over indexed languages \cite{Aho68}, and doubly exponential-time model
checking for \PDL{} over multi-stack visibly pushdown languages \cite{conf/lics/TorreMP07}.

A natural question that arises from the lifting of the decidability gap in satisfiability checking to the expressiveness gap, 
as done in the previous section, is whether complexity-theoretic gaps can be used for such purposes as well. The answer is
of course yes: if the data complexity of two logics is separated by provably different complexity classes then so is their
expressivity. This has been used for instance to establish that each \HFL{k+1} is more expressive than \HFL{k} for $k \ge 1$
\cite{als-mchfl07}, making use of the time hierarchy theorem. Likewise, the space hierarchy theorem can be used to
separate the so-called tail-recursive fragments of each \HFL{k} \cite{BruseLL21}. It remains to be seen, though, if such results
can be used to obtain separations from highly expressive \PDL{}-based logics.

\bibliographystyle{eptcs}
\bibliography{biblio}
\end{document}